\newtheorem{theorem}{Theorem}
\newtheorem{observation}{Observation}
\newtheorem{corollary}[theorem]{Corollary}
\newtheorem{proposition}[theorem]{Proposition}
\newtheorem{lemma}{Lemma}
\newtheorem{definition}{Definition}
\newtheorem{assumption}{Assumption}
\newcommand{\First}[1]{{\it First}\ensuremath{(#1)}\xspace}
\newcommand{\Last}[1]{{\it Last}\ensuremath{(#1)}\xspace}
\newcommand{\dist}{\ensuremath{{\rm dist}}\xspace}
\newcommand{\IC}{\ensuremath{\mathcal I}\xspace}
\newcommand{\Look}{{\it Look}\xspace}
\newcommand{\Compute}{{\it Compute}\xspace}
\newcommand{\Move}{{\it Move}\xspace}
\newcommand{\MoveTo}[1]{{\tt Move(#1)}\xspace}
\newcommand{\Pos}[1]{{{\tt Pos[}#1{\tt]}}\xspace}
\newcommand{\DP}[1]{{{\tt DP}(#1)}\xspace}
\newcommand{\setRobots}{\ensuremath{{\mathcal R}}\xspace}
\newcommand{\MS}[1]{\ensuremath{{\mathcal M}{\mathcal S}(#1)}\xspace}
\newcommand{\LIM}{{{\tt LIM}}\xspace}
\newcommand{\AW}[1]{{{\tt AW}(#1)}\xspace}
\newcommand{\Qi}{\ensuremath{Q_1}\xspace}
\newcommand{\Qii}{\ensuremath{Q_2}\xspace}
\newcommand{\NW}{\ensuremath{{\mathcal N}{\mathcal W}}\xspace}
\newcommand{\SE}{\ensuremath{{\mathcal S}{\mathcal E}}\xspace}
\providecommand{\norm}[1]{\lVert#1\rVert}
\newcommand{\NearG}{{\sc Near-Gathering}\xspace}
\newcommand{\ASYNC}{{\sc Async}\xspace}
\newcommand{\SSYNC}{{\sc Ssync}\xspace}
\newcommand{\FSYNC}{{\sc Fsync}\xspace}
\newenvironment{state}[2]{
\fbox{
\begin{minipage}{\columnwidth}
{\bf State} #1
\ \\
{\small #2}
\end{minipage}
}
}{\vspace{1em}}
\title{Getting Close Without Touching: Near-Gathering for \\Autonomous Mobile Robots\thanks{This work has been partially supported by MIUR of Italy under project ARS TechnoMedia.}}
\author{
Linda Pagli\footnotemark[1]\
\and
Giuseppe Prencipe\footnotemark[1]\
\and
Giovanni Viglietta\footnotemark[2]\
}
\date{}
\begin{document}

\maketitle

\renewcommand{\thefootnote}{\fnsymbol{footnote}}
\footnotetext[1]{Dipartimento di Informatica,
Universit\`a di Pisa, {\tt \{linda.pagli,giuseppe.prencipe\}@unipi.it}}
\footnotetext[2]{School of Electrical Engineering and Computer Science,
University of Ottawa, Canada, {\tt viglietta@gmail.com}}
\renewcommand{\thefootnote}{\arabic{footnote}}

\maketitle

\begin{abstract} 
In this paper we study the \NearG problem for a finite set of dimensionless, deterministic, asynchronous, anonymous, oblivious and autonomous mobile robots with limited visibility moving in the Euclidean plane in Look-Compute-Move (LCM) cycles. In this problem, the robots have to get close enough to each other, so that every robot can see all the others, without touching (i.e., colliding with) any other robot. The importance of solving the \NearG problem is that it makes it possible to overcome the restriction of having robots with limited visibility. Hence it allows to exploit all the studies (the majority, actually) done on this topic in the unlimited visibility setting. Indeed, after the robots get close enough to each other, they are able to see all the robots in the system, a scenario that is similar to the one where the robots have unlimited visibility. 

We present the first (deterministic) algorithm for the \NearG problem, to the best of our knowledge, which allows a set of autonomous mobile robots to nearly gather within finite time without ever colliding. Our algorithm assumes some reasonable conditions on the input configuration (the \NearG problem is easily seen to be unsolvable in general). Further, all the robots are assumed to have a compass (hence they agree on the ``North'' direction), but they do not necessarily have the same handedness (hence they may disagree on the clockwise direction).

We also show how the robots can detect termination, i.e., detect when the \NearG problem has been solved. This is crucial when the robots have to perform a generic task after having nearly gathered. We show that termination detection can be obtained even if the total number of robots is unknown to the robots themselves (i.e., it is not a parameter of the algorithm), and robots have no way to explicitly communicate.
\end{abstract}

\section{Introduction}

Consider a distributed system whose entities are a finite set of {dimensionless} {\em robots} or {\em agents} that can freely move on {the Euclidean} plane, operating in {\em Look-Compute-Move} (LCM) cycles. During each cycle, a robot takes a snapshot of the positions of the other robots (\Look);
executes a deterministic protocol, the same for all robots, using the snapshot as an input (\Compute); and
moves towards the computed destination (\Move). After each cycle, a robot may {stay idle} for some time.
With respect to the LCM cycles, the most common models used in these studies are  the {\em  fully synchronous} (\FSYNC),  
the {\em semi-synchronous} (\SSYNC),  and  the {\em asynchronous} (\ASYNC).   
In the {\em asynchronous} (\ASYNC) model, each robot acts independently from the others and the duration of each cycle is finite but unpredictable; thus, there is no common notion of time, and robots can compute and move based on ``obsolete'' observations. 
In contrast, in the {\em fully synchronous} (\FSYNC) model, there is a common notion of time, and robots execute their cycles synchronously. In {this model}, time is assumed to be discrete, and at each time instant {\em all} robots are activated, obtain the same snapshot, compute and move towards the computed destination; thus, no computation or move can be made based on obsolete observations.  
The last model, the {\em semi-synchronous} (\SSYNC), is like \FSYNC where, however, not all robots are necessarily activated at each time instant. 

In the last few years, the study of the computational capabilities of such a system has gained much attention, and the main goal of the research efforts has been to understand the relationships between the capabilities of the robots and their power to solve common tasks. The main capabilities of the robots that, to our knowledge, have been studied so far in this distributed setting are {\em visibility}, {\em memory}, {\em orientation}, and {\em direct communication}. With respect to visibility, the robots can either have {\em unlimited visibility}, if they sense the positions of {\em all} other robots, or have {\em limited visibility}, if they sense just a portion of the plane, up to a given distance $V$~\cite{andoi,FloPSW05}. With respect to memory, the robots can either be {\em oblivious}, if they have access only to the information sensed or computed during the current cycle (e.g.,~\cite{SouDY09}), or {\em non-oblivious}, if they have the capability to store the information sensed or computed since the beginning of the computation (e.g.,~\cite{cie04,SuzY99,yamashita2010}). With respect to orientation, the two extreme settings studied are the one where the robots have {\em total agreement}, and agree on the orientation and direction of their local coordinate systems (i.e., they agree on a {\em compass}), e.g.,~\cite{FloPSW08}, and the one where the robots have {\em no agreement} on their local coordinate axes, e.g.,~\cite{SuzY99,yamashita2010}. In the literature, there are studies that tackle also the scenarios in between; for instance, when the robots agree on the direction of only one axis, or there is agreement just on the orientation of the coordinate system (i.e., right-handed or left-handed), e.g.,~\cite{petitLeaderPF}. With respect to direct communication, {some recent studies introduced} the use of external signals or lights to enhance the capabilities of mobile robots. These were first suggested in~\cite{pelegInvited}, and were also referenced in~\cite{efrima07}, which provided the earliest indication that incorporating some simple means of signaling in the robot model might positively affect the power of the team. Recently, a study that tackles this particular capability more systematically has been presented in~\cite{floccICDCS2012}. 

In this paper, we solve the \NearG problem: the robots are required to get close enough to each other, without ever colliding during their movements. Here, the team of robots under study executes the cycles according to the \ASYNC model, the robots are oblivious and have limited visibility. The importance of {solving the \NearG problem} is that {it allows} to overcome the limitations of having robots with limited visibility, and it {makes it} possible to exploit all the studies (the majority, actually) done in the unlimited visibility setting, such as, for instance, the {\em Arbitrary Pattern Formation Problem}~\cite{petitLeaderPF,FloPSW08,SuzY99,yamashita2010}, or the {\em Uniform Circle Formation} (e.g., \cite{DefS08,DieLP08}). {Indeed, if all the robots get close enough, they eventually become able to see one another, reaching a configuration in which they may be assumed to have unlimited visibility (recall that the robots are dimensionless).} Since most of the {studies related} to the unlimited visibility case assume a starting configuration where no two robots {coincide} (i.e., they do not share the same location in the plane), it is of crucial importance to ensure that no collision occurs during the process.

A problem {that is similar} to \NearG is the {\em gathering} problem, {in which} the robots have to meet, within finite time, in a point of the plane not agreed upon in advance. {Note that the gathering problem requires all robots to actually become coincident, while in \NearG they have to \emph{approach} a point, but they are not allowed to collide with each other. Another related problem is the \emph{convergence} problem, in which the robots have only to approach a point in the plane and converge to it in the limit, but they do not necessarily have to reach it in finite time, and they may collide with each other in the process. Hence, the convergence problem is easier than both gathering and \NearG. For a discussion on previous solutions to the problems of gathering and convergence, and how they fail to solve \NearG, refer to Section~\ref{sec:previous}.}

A preliminary solution to the \NearG problem has been presented {by the authors} in~\cite{NearGSIROCCO}; however, that solution worked with distances induced by the infinity norm.\footnote{{The infinity norm of a vector $(x,y)\in \mathbb R^2$ is defined as $\norm{(x,y)}_\infty=\max\{|x|,|y|\}$.}} In this paper we drop that assumption, presenting a more general solution that works with the usual Euclidean distance. We emphasize that the technique used in this paper can be easily adapted to solve the \NearG problem under any $p$-norm distance with $p \geq 1$, including the infinity norm distance used in~\cite{NearGSIROCCO}. We also note that, in contrast with \cite{NearGSIROCCO} and other works on limited visibility, such as \cite{FloPSW05}, we only assume that the robots have agreement on one axis (as opposed to both axes). In order to detect termination, the algorithm in~\cite{NearGSIROCCO} requires either the knowledge of the number of robots in the system, or the ability of the robots to communicate through visible lights that can be turned on or off. In the present paper we are able to drop both requirements, and still detect termination.

It is worth mentioning that in~\cite{NearGSIROCCO} a tacit assumption is made on the starting positions of the robots. Namely, we consider the graph on the robot set, with an edge connecting two robots if their initial distance is at most $D$, where $D$ is a known constant that is smaller than the visibility radius of the robots (but may be arbitrarily close to it). The assumption is that such a graph is connected. Here we make this assumption explicit, and we give a more rigorous proof of our algorithm's correctness. Finally, we remark that, since the algorithm presented here is for the \ASYNC model, it solves the problem \emph{a fortiori} also in the \SSYNC and \FSYNC models.

The organization of the paper is as follows: in Section~\ref{sec:model} the formal definition of the robot model is presented; in Section~\ref{sec:solution} the collision-free algorithm that solves the \NearG problem is presented, {after discussing why previous solutions to related problems fail to solve it}; in Section~\ref{sec:correctness} the correctness of {our algorithm is proven}. Finally, Section~\ref{sec:conclusions} concludes the paper, {suggesting some directions for future research}.

\begin{figure}[t]
\centering
\subfigure[]{\label{fig:axes}\includegraphics[scale=.85]{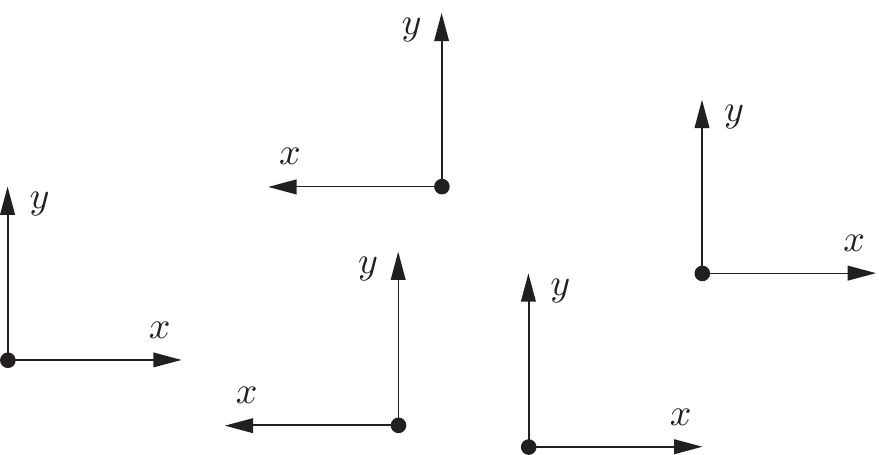}}\qquad\qquad\qquad
\subfigure[]{\label{fig:algoa}\includegraphics[scale=.7]{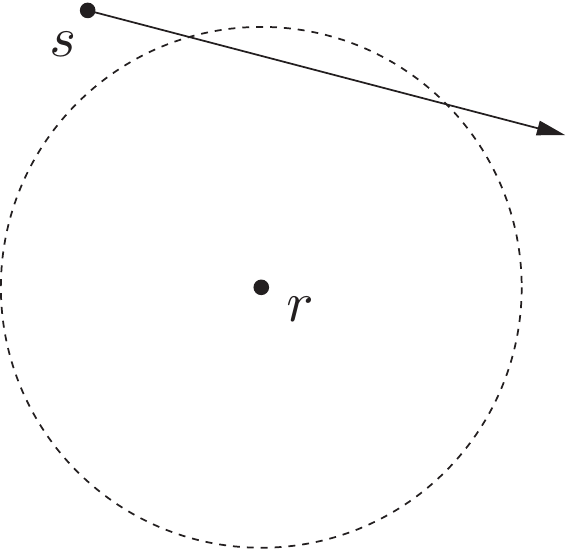}}
\caption{{(a) The robots in the swarm agree on the $y$-axis but not on the $x$-axis.} (b) In the limited visibility setting a robot can only see robots that are within its radius of visibility. As a consequence, when $s$ starts moving
(the left end of the arrow), $r$ and $s$ do not
see each other. While $s$ is moving, perhaps $r$ \Look{}s and sees $s$;
however, $s$ is still unaware of $r$. After
$s$ passes the area of visibility of $r$, it is still unaware of $r$.}
\end{figure}

\section{The Model\label{sec:model}} 

The system is composed of a team of {finitely many} mobile entities, called {\em robots}, each {representing} a computational unit provided with its own local memory and capable of performing local computations. The robots are {modeled as} points in the {Euclidean plane $\mathbb R^2$}. Let $r(t)$ denote
the ``absolute'' position of robot $r$ at time $t$ (i.e., with respect to an absolute {coordinate system}), {where $0\leq t\in \mathbb R$}; also, we will denote by $r(t).x$ and $r(t).y$ the abscissa and the ordinate value of $r(t)$, respectively. When no ambiguity arises, we shall omit the temporal indication; also, the {\em configuration} of the robots at time $t$ is the set of robots' positions at time $t$.

Each robot has its own local {orthogonal} coordinate system, {centered at its location,} and we assume that the local coordinate systems of the robots agree on the directions of the $x$- and $y$-axes. As discussed in Section~\ref{sec:conclusions}, the algorithms that we present in this paper also works in the more restricted model in which the robots agree on the direction of just one axis, {as illustrated in Figure~\ref{fig:axes}}. A robot is endowed with sensorial capabilities and it observes the world by activating its sensors, which return a snapshot of the positions of all other robots with respect to its local coordinate system. The visibility radius of the robots is limited: robots can sense only points in the plane within distance $V$. This setting, referred to in the literature as
{\em limited visibility}, is understandably more
difficult; for example, a robot with limited visibility might not even know the total
number of robots nor where they are located, if outside its visibility range. Also, when combined with the asynchronous behavior of the robots, it introduces a higher level of difficulty in the design of collision-free protocols. 
For instance, in the example depicted in Figure~\ref{fig:algoa}, robot $s$, in transit towards its destination, might be seen by $r$; however, $s$ is not aware of $r$'s existence
and, if it starts the next cycle before $r$ starts moving, $s$ will continue to
be unaware of $r$; hence, since $r$ does not see $s$ when $s$ starts its movement, it must take care of the possible arrival of $s$ when computing its destination. 

All robots are identical: they are indistinguishable from their appearance and they execute the same protocol. Robots are autonomous, without a central control. Robots are silent, in the sense that they have no means of direct communication (e.g., radio, infrared) of information to other robots. Robots are endowed with motorial capabilities, and can move freely in the plane. {As a robot moves, its coordinate system is translated accordingly, in such a way the the robot's location is always at the origin.}

{Each robot continually performs \Look-\Compute-\Move (LCM) cycles, each consisting of three different \emph{phases}:}

\begin{description}
\item[(i)] {\bf Look:} The robot observes the world by activating its sensor,
which returns a snapshot of
the positions of all robots within its radius of visibility with respect to its own
coordinate system
(since robots are {modeled} as points, their positions in the
plane are just the set of their coordinates).

\item[(ii)] {\bf Compute:} The robot executes its (deterministic) algorithm, using the snapshot as input.
The result of the computation is a destination point, {expressed in the robot's own coordinate system}. There is no time limit to perform such a computation, although the robot can only compute finite sequences of algebraic functions on the visible robots' coordinates (actually, the algorithm proposed in this paper uses only arithmetic operations and square roots).

\item[(iii)] {\bf Move:} The robot moves monotonically towards the computed destination along a straight line;
 if the destination is
the current location, the robot stays still (performs a {\em null movement}). No assumptions are made on the speed of the robot, as it may vary arbitrarily throughout the whole phase.
\end{description}

The robots do not have persistent memory, that is, memory whose content is preserved from one cycle to the next; they are said to be {\em oblivious}. The only available memory they have is used to store local variables needed to execute the algorithm, which are erased at each cycle. {All robots are initially idle, until they are activated by a scheduler and start executing the \Look phase of the first cycle.} The amount of time to complete a cycle is assumed to be finite, but unpredictably variable from cycle to cycle and from robot to robot {(i.e., the scheduler model is \ASYNC)}, but the \Look phase is assumed to be instantaneous. {As a consequence, a robot may even stay still for a long time after it has reached its current destination point, before performing the \Look phase of the next cycle, or it can stop for a while in the middle of a move and then proceed, etc. All these actions are controlled by the scheduler, which is an entity independent of the robots and their protocol, and may be seen as an ``adversary'' whose purpose is to prevent the robots from accomplishing their task.}

{The scheduler may also end the \Move phase of a robot before it has reached its destination, forcing it to start a new cycle with a new input and a new destination: this feature is intended to model, for instance, a limit to a robot's motion energy. However, there exists a constant $\delta > 0$ such that, if the destination point computed by a robot has distance smaller than $\delta$ from the robot's current location, the robot is guaranteed to reach it; otherwise, it will move towards it by at least $\delta$. Note that, without this assumption, the scheduler could make it impossible for a robot to ever reach its destination, even if the robot keeps computing the same destination point. For instance, the scheduler may force the robot to move by smaller and smaller amounts at every cycle, converging to a point that is not the robot's intended destination. Instead, if the robot cannot be interrupted by the scheduler before it has moved by at least $\delta$, and it keeps computing the same destination point, it is guaranteed to reach it in finitely many cycles. The value of $\delta$ is not known to the robots, hence it cannot be used in their computations}.

We will denote by ${\mathbb L(t)}$, ${\mathbb C(t)}$, ${\mathbb M(t)}$ the sets of robots that are, respectively, active in a \Look phase, in a \emph{Compute} phase, and in a \emph{Move} phase at time $t$.

{We stress that robots are modeled as just points in the plane, and as such they do not have an associated vector indicating their ``heading'' or ``forward direction''. Likewise, a robot's coordinate system never rotates, but only translates following the robot's movements. Moreover, all robots have the same \emph{visibility radius} $V$, which is known to them and can be used in their computations. $V$ also serves as a common unit distance for the robots.}

\subsection{Notation and Assumptions\label{sec:notation}}

We will denote by  $\setRobots=\{r_1, \cdots, r_n\}$ the set of robots in the system. The purpose of this paper is to study the \NearG problem:

\begin{definition}[\NearG]
The \emph{\NearG problem} requires all robots to terminate their execution in a configuration such that there exists a disk of radius $\varepsilon$ containing all the robots, where $\varepsilon$ is a fixed constant, and no two robots occupy the same location.
\end{definition}

All the robots are required to execute the same protocol during their \Compute phase. The input to such a protocol is the snapshot of the robots' locations obtained by the executing robot during its previous \Look phase, along with the visibility radius $V$ (which is the same for all robots), and of course the value of $\varepsilon$.

The protocol executed by the robots must be independent of the initial configuration of the robots, and must make the robots solve the \NearG problem from any initial configuration. However, in the limited visibility model, this requirement is known to be too strong, and some additional assumptions must be made on the \emph{initial distance graph} in order to make the problem solvable.

\begin{definition}[Initial Distance Graph~\cite{FloPSW05}]
The {\em initial distance graph} $I = (\setRobots, E)$ of the robots is the graph such that, for any two distinct robots $r$ and $s$, $\{r,s\} \in E$ if and only if $r$ and $s$ are initially at distance not greater than the visibility radius $V$, i.e., $\dist(r(0),\,s(0)) \leq V$.
\end{definition}

By ``$\dist$'' we denote the usual Euclidean distance. In~\cite{FloPSW05} it is proven that, if the initial distance graph $I$ is not connected, then the gathering problem may be unsolvable; the same result clearly holds also for the \NearG problem:

\begin{observation}
If the initial distance graph $I$ is not connected, the \NearG problem may be unsolvable.\qed
\end{observation}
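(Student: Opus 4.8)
The plan is to derive this from the gathering impossibility of \cite{FloPSW05} by showing that the same adversarial construction also defeats \NearG. Since the claim only asserts that the problem \emph{may be} unsolvable, it suffices to prove that no single deterministic protocol can solve \NearG on all disconnected instances; equivalently, I would fix an arbitrary candidate protocol $A$ and exhibit a disconnected initial configuration on which $A$ fails. First I would take two identical copies $C_1$ and $C_2$ of a fixed nonempty sub-configuration, with $C_2$ obtained from $C_1$ by a translation $\vec d$ of length $\norm{\vec d}=\Delta$, where $\Delta$ is chosen much larger than both $V$ and $2\varepsilon$; the resulting initial distance graph is disconnected, since every robot of $C_1$ is at distance $\Delta > V$ from every robot of $C_2$. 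The guiding idea is that, as long as no robot of one copy sees a robot of the other, the two copies are \emph{indistinguishable} to their robots: every snapshot contains only robots of the observer's own copy, presented identically up to the translation $\vec d$.

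I would then run $A$ under a synchronous schedule (which is admissible in \ASYNC) and prove, by induction on the cycle number, the key invariant that the two copies stay mutually invisible and that the configuration of $C_2$ is, at every instant, exactly the $\vec d$-translate of that of $C_1$. The inductive step is immediate from determinism: if up to the current cycle no robot of $C_1$ has ever seen a robot of $C_2$, then the inputs fed to corresponding robots of the two copies coincide up to the translation, so the computed destinations are $\vec d$-translates of one another and the translational symmetry is preserved. Consequently the evolution of each copy coincides with the evolution of a single isolated copy of the sub-configuration. When $A$ halts, all robots of $C_1$ lie in some bounded region and all robots of $C_2$ lie in its $\vec d$-translate; these two groups are separated by a distance of order $\Delta > 2\varepsilon$ and therefore cannot be contained in a common disk of radius $\varepsilon$, so $A$ fails to solve \NearG. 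The same separation rules out gathering, which recovers the statement of \cite{FloPSW05}.

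The hard part is justifying the invariant that the copies never become mutually visible, since a priori $A$ might drive robots arbitrarily far during the transient before halting. I would resolve this by exploiting that \NearG must terminate in \emph{finite} time: by the invariant, each copy performs the single-copy execution, which therefore lasts finitely many cycles and confines its robots to a bounded region of some finite diameter $R$ (depending only on $A$ and the chosen sub-configuration). Selecting $\Delta > 2R + V$ then guarantees that the within-copy displacements can never bridge the gap, so mutual invisibility holds throughout and the induction closes. The cleanest instance to carry this out is the degenerate one in which each copy is a single robot: then the two robots see nothing, both reproduce $A$'s (necessarily terminating) response to the empty snapshot, move by at most $R$, and end up more than $2\varepsilon$ apart while never coming within distance $V$ of each other. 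A final detail to handle is the robots' possible disagreement on the orientation of the $x$-axis: I would take $\vec d$ parallel to the commonly agreed axis, so that the translation commutes with any per-robot reflection of the other axis and the indistinguishability argument is unaffected.
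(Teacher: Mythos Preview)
Your argument is correct and is essentially the standard indistinguishability construction that underlies the cited result in \cite{FloPSW05}. Note, however, that the paper does not actually give a proof of this observation: it is stated with an immediate \qed, the justification being the preceding sentence that the analogous unsolvability for gathering is proven in \cite{FloPSW05} and ``the same result clearly holds also for the \NearG problem.'' So there is nothing to compare against beyond the bare citation.

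Your write-up is somewhat more elaborate than necessary. In the single-robot-per-copy instance you ultimately settle on, the detour through a diameter bound $R$ and the termination of the single-copy execution is not needed: under a synchronous schedule with identical local frames (which, as you note, the adversary may choose), the two robots receive identical snapshots at every step, compute identical destinations in local coordinates, and hence move by identical vectors, so their mutual distance stays exactly $\Delta$ at all times. Picking $\Delta>\max\{V,2\varepsilon\}$ then suffices, with no appeal to boundedness of the transient. The $R$-bound becomes relevant only in your more general two-copy construction with nontrivial sub-configurations, where robots within a copy may drift and potentially approach the other copy; there your contradiction set-up (assume $A$ solves \NearG everywhere, hence terminates on the connected single-copy instance, hence $R$ is finite) is the right way to close the induction. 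Either route is sound; the degenerate two-robot instance is already enough for the observation as stated.
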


However, our solution to the \NearG problem requires a slightly more restrictive initial condition. Let $\sigma$ be an arbitrary small and positive constant, and let $D = V-\sigma$.
\begin{definition}[Initial Strong Distance Graph]
The \emph{initial strong distance graph} $J = (\setRobots, E)$ of the robots is the graph such that, for any two distinct robots $r$ and $s$, $\{r,s\} \in E$ if and only if $r$ and $s$ are initially at distance not greater than $D$, i.e., $\dist(r(0),\,s(0)) \leq D$.
\end{definition}

In the following, we will assume that:
 
\begin{assumption}
\label{a:assumption1}
The initial strong distance graph $J$ is connected.
\end{assumption}

We remark that $D$ (or at least lower bound on $D$) must be known to the robots. This is not much of a benefit to the robots in terms of raw computational power, since $V$ is already known to all the robots and can already be used in their computations as a common unit distance. Besides, by choosing $\sigma$ to be small enough, the set of initial configurations ruled out by Assumption~\ref{a:assumption1} becomes negligible.

The reasons why we need such a slightly more restrictive assumption are technical, and will become apparent in Section~\ref{sec:correctness}, when the correcntess of our algorithm will be proven. We stress that Assumption~\ref{a:assumption1} only refers to the \emph{initial} strong distance graph, while it does not require such a graph to be connected at all times. However, as we will prove in Section~\ref{sec:correctness}, our algorithm will indeed preserve the connectedness of a closely related distance graph throughout the execution.

Note that the definition of \NearG does not require the robots to avoid collisions during the execution, but it only requires them to occupy distinct locations when they all have terminated their execution. However, for \NearG to be solvable, the robots must necessarily occupy distinct locations in the initial configuration, otherwise the scheduler could always activate coinciding robots simultaneously, and never allow them to occupy distinct locations. The algorithm we will describe in this paper is in fact collision-free, that is, it always prevents robots from colliding, provided that they start from distinct locations. As a by-product, our algorithm works regardless of the ability of the robots to detect the presence of more than one robot in the same location (called \emph{multiplicity detection} in the literature~\cite{cie04,FloPSW05}).

Another necessary assumption is that no robot is moving at time $t=0$. If the robots are already moving when the execution starts, and two robots have the same destination point, nothing can prevent them from colliding. Moreover, after they have collided, the scheduler can force them to remain coincident forever, by activating them synchronously. If this happens, the \NearG cannot be solved.

Summarizing, we will make Assumption~\ref{a:assumption1} on the initial configuration of the robots, and we will also assume that initially no robot is moving, and no two robots occupy the same location. The protocol executed by the robots in the \Compute phase takes this input:
\begin{itemize}
\item an array of points expressed in the local coordinate system of the executing robot, denoting the locations of the visible robots observed during the previous \Look phase;
\item the visibility radius $V$ (the same for all robots);
\item the value of $D$ (the same for all robots);
\item the value of $\varepsilon$ (required for termination).
\end{itemize}

Observe that the value of $\delta$ is \emph{not} part of the input, and therefore the robots do not have a lower bound on the minimum distance that they are guaranteed to cover in a single \Move phase.

\section{The \NearG Problem and Its Solution\label{sec:solution}}

{In Section~\ref{sec:previous} we discuss some previous solutions to the gathering and convergence problems, explaining why they cannot be easily adapted to solve \NearG. Then, in Section~\ref{sec:algo} we give our solution to the \NearG problem.

\subsection{Previous Solutions to Related Problems}\label{sec:previous}

\paragraph{Gathering.}
Of course, since the gathering problem requires all robots to collide, no solution to this problem is a valid solution to \NearG. However, we may wonder if a simple modification of an existing gathering algorithm may solve \NearG.

The gathering problem has been studied in the literature in all models but, to the best of our knowledge, the most pertinent paper is~\cite{FloPSW05}, which considers robots with limited visibility in the \ASYNC setting. The algorithm in~\cite{FloPSW05} assumes all robots to agree on the direction of both axes, and ideally it makes the leftmost and topmost robots move first, rightwards and downwards, until all the robots gather. According to the protocol, a robot $r$ will occasionally compute a destination point that coincides with another visible robot $s$'s location. To avoid this type of move, we may make $r$ move toward $s$ without reaching it. If we consider an initial configuration in which all robots lie on the same vertical line, the only robot that is allowed to move according to the algorithm in~\cite{FloPSW05} is the topmost robot $r$. Moreover, if $r$ moves downward without ever reaching the next robot, then no robot other than $r$ will ever be able to move. Therefore, we ought to let robots other than $r$ move, as well. Unfortunately, the proof of correctness of the algorithm, given in~\cite{FloPSW05}, strongly depends on the fact that the robots in the swarm move in a strictly ordered fashion. If we let any robot move, then we have to make sure that the visibility graph remains connected throughout the execution, and that the robots still converge to a single point. Clearly, even if a suitable adaptation of this idea can be effectively applied to solve \NearG, the modified protocol would require a radically new analysis and proof of correctness.

\paragraph{Convergence.}
Several solutions to the convergence problem have been proposed, as well. If we manage to obtain a solution that also avoids collisions, we can successfully apply it to \NearG.

Perhaps the most natural strategy, at least in the unlimited visibility model, is to make all robots move to their center of gravity. This simple protocol has been analyzed in~\cite{CohP05}, and it has been proven correct even in the \ASYNC model. In the limited visibility setting, the only relevant work, to the best of out knowledge, is~\cite{andoi}, which gives a convergence algorithm that assumes the \SSYNC scheduler. However, in the special case in which the robots' locations are the vertices of a regular polygon and they are all mutually visible, both the center-of-gravity algorithm and the algorithm in~\cite{andoi} behave in the same way, and make any active robot move to the center of the polygon. Hence, if two robots are activated simultaneously from this configuration, they collide and fail to solve \NearG.

Therefore, we may modify the protocol and make each robot approach the center of gravity by, say, moving half-way towards it. We show that this protocol may still cause collisions in the \ASYNC model, even in the very simple case in which the system consists of only two robots. Let $r$ and $s$ be two mutually visible robots, such that $r(0)=(0,0)$ and $s(0)=(3124,0)$. Let the scheduler activate $r$, which observes that the center of gravity is point $(1562,0)$, and therefore computes the destination point $(781,0)$ (i.e., the point half-way toward the center of gravity). Now the scheduler lets $r$ start moving and, as soon as it reaches point $(52,0)$, it temporarily delays the remaining part of the move and makes $s$ quickly perform five complete cycles. As $r$ is always seen in $(52,0)$, $s$ moves first to $(2356,0)$, then to $(1780,0)$, $(1348,0)$, $(1024,0)$, and finally to $(781,0)$. Now the scheduler lets $r$ finish its original move, and this causes a collision with $s$ in $(781,0)$. Observe that, even if the protocol does not make the robots move half-way toward the center of gravity, but to some other fraction of the distance, similar examples can be constructed in which the robots collide.

\paragraph{Further literature.}
Several other papers considered the gathering or the convergence problems in various models, but these results are either not relevant to \NearG in our model, or they can be reduced to solutions already discussed above, and therefore discarded.

In~\cite{SouDY09}, the gathering problem is studied for robots with limited visibility, the \SSYNC scheduler, and \emph{temporarily unreliable compasses}. In the special case in which the robots are close enough and their compasses are reliable, the proposed algorithm becomes equivalent to that of~\cite{FloPSW05}, which has already been analyzed and discussed.

The gathering problem is studied in~\cite{ganguli09} in the context of non-convex environments and limited visibility, but with the \FSYNC scheduler. However, if the robots are close enough and they all see each other, the algorithm makes them all move to the center of the smallest enclosing circle. Hence, in the special case in which they form a small-enough regular polygon, they move to the center of gravity, and therefore the algorithm becomes equivalent to those of~\cite{andoi,CohP05}, which have already been discussed.

The convergence problem with limited visibility has been studied also for robots whose level of asynchronicity lies strictly between \SSYNC and \ASYNC. In~\cite{LinMA07b}, it is assumed that the time spent in a \Look or \Move phase is bounded, and the algorithm is a slight modification of that of~\cite{andoi}. In particular, it suffers from the issues that have already been discussed for~\cite{andoi}.

On the other hand, in~\cite{katreniak2011} the scheduler is \emph{1-bounded} \ASYNC, which means, roughly, that no robot can perform more than one \Look phase between two consecutive \Look phases of another robot. As it turns out, if the number of robots is even and they are vertices of a small-enough regular polygon, the algorithm makes them move to the center of gravity. Once again, this type of move has already been analyzed and discarded.

In~\cite{IIKO13}, the gathering problem is considered for the \SSYNC scheduler and the unlimited visibility setting. Here the focus is on the expected termination time of a randomized algorithm where the robots have some sort of \emph{multiplicity detection}, i.e., the ability to detect the presence of more than one robot in the same location. Unfortunately, both algorithms presented in this paper make all robots move to the center of the smallest enclosing circle, except in some special cases. When applied to the \NearG problem, this approach suffers from the same issues of the center-of-gravity approach.

In~\cite{pelc09}, the gathering problem in \ASYNC is studied for \emph{fat} robots, i.e., robots that are modeled as solid discs rather than dimensionless points. Unfortunately, the problem is solved only for a swarm of at most four robots, and the technique involves a case analysis that does not generalize to bigger swarms. Therefore this solution is irrelevant to our problem.

The above result has been generalized in~\cite{AGM13}, which solves the gathering problem for any number of fat robots. The robots considered have an unlimited visibility radius, and therefore the limitations posed by a bounded visibility radius are not addressed in the paper. Additionally, letting fat robots collide is not an issue, but instead it is a necessary event that is sought by the algorithm. For these reasons, the approach of this paper can hardly be adapted to our problem.

Another work that considers the gathering problem for fat robots is~\cite{CLDFH11}, which works in the unlimited visibility setting and the \FSYNC scheduler. Moreover, the gathering point is given as input to all the robots. Because of these differences with our model, it is impossible to extract a sound algorithm for \NearG from this work: indeed, the task of making such fat robots touch each other is simple, and the paper focuses on how to make robots slide around each other in order to occupy a small area. All these issues are meaningless in our model, and the real issues of our model become meaningless with fat robots.

\subsection{Solving the \NearG Problem}\label{sec:algo}

We conjecture that no solution to \NearG exists in the \ASYNC model in which the robots do not agree at least on one axis. Therefore, in the following we will assume to have agreement on both axes, and in Section~\ref{sec:conclusions} we will observe that our solution works even in the case of agreement on just one axis.}

The general high-level idea of the algorithm is to make the robots move upward and to the right, until they aggregate around the top-right corner of the smallest box that contains all of them. A robot's destination point is carefully computed, taking into account several factors. To avoid collisions, robots try to move in order, never ``passing'' each other, and never getting in each other's way. This is not a trivial task, because the visibility of the robots is limited, and they cannot predict the moves of the robots they cannot see. On the other hand, robots try to preserve mutual visibility by not moving too far from other visible robots, avoiding to leave them behind. This is supposed to prevent the robots from separating into different groups, which may be unaware of each other and aggregate around different points. As it turns out, the robots are unable to always preserve mutual visibility, but they can indeed preserve ``mutual awareness'', which is a concept that will be introduced shortly. These different behaviors are blended together and balanced in such a way that the robots are not only guaranteed to avoid collisions and remain mutually aware, but also to effectively aggregate around some point, and never ``get stuck'' or converge to different limit points. This is obtained by always making robots move by the greatest possible amount, compatibly with the above restrictions.

\begin{figure}[!ht]
\begin{state}{\Look}{
\vspace{-1em}
\begin{itemize}
\item[] Take the snapshot of the positions of the visible robots, which returns, for each robot $r \in \mathcal R$ at distance at most $V$, $\Pos{r}$, the position in the plane of robot $r$, according to my coordinate system (i.e., my position is $(0,0)$).
\end{itemize}}
\end{state}

\begin{state}{\Compute (returns destination point $dp=(dp.x,dp.y)$}{
\vspace{-1em}
\begin{algorithmic}
\STATE $\rho=\min\left\{V/4,\,V-D\right\}$;
\STATE $\varepsilon'=\min\{\varepsilon, \, \rho/2\}$;
\STATE $\mathcal Z = $ Set of visible robots (including myself);
\IFi{$\forall r_1,r_2 \in \mathcal Z,\ \dist(\Pos{r_1},\Pos{r_2})\leq \varepsilon'$}
{\textbf{Terminate};}
\STATE $D_0=$ Closed disk with radius $V$ and center in $(0,0)$;
\STATE $D_1=$ Closed disk with radius $V-\rho/2$ and center in $(0,0)$;
\STATE $D_2=$ Closed disk with radius $V-\rho$ and center in $(0,0)$;
\STATE $p_1=$ Leftmost intersection between $D_1$ and the horizontal line through $(0,\,V-\rho)$;
\STATE $p_2=$ Bottommost intersection between $D_1$ and the vertical line through $(V-\rho,\,0)$;
\STATE $S=$ Full closed square circumscribed around $D_2$ with edges parallel to the $x$- and $y$-axes;
\STATE $R=D_1 \cap S$;
\STATE $\Qi=$ Set of points of $D_0$ with positive $y$-coordinate and non-positive $x$-coordinate;
\STATE $\Qii=$ Set of points of $D_0$ with positive $x$-coordinate and non-positive $y$-coordinate;
\STATE $H_1=$ Set of points of $(R\setminus D_2)\cap \Qi$ whose $x$-coordinate is lower than $p_1.x$;
\STATE $H_2=$ Set of points of $(R\setminus D_2)\cap \Qii$ whose $y$-coordinate is lower than $p_2.y$;
\STATE $\NW = \left\{ r\in\mathcal Z\ |\ \Pos{r}\in \Qi \right\}$;
\STATE $\SE = \left\{ r\in\mathcal Z\ |\ \Pos{r}\in \Qii \right\}$;
\STATE $\displaystyle dp.x=\min\left\{\min_{\, r\in \mathcal \SE}\left\{\Pos{r}.x\right\},\ \max_{\, r\in \mathcal Z}\left\{\Pos{r}.x\right\},\, \rho/2\right\}$;
\STATE $\displaystyle dp.y=\min\left\{\min_{\, r\in \mathcal \NW}\left\{\Pos{r}.y\right\},\ \max_{\, r\in \mathcal Z}\left\{\Pos{r}.y\right\},\, \rho/2\right\}$;
\FOR{\textbf{Each} $r\in\mathcal Z$}
	\IFiELSIF{$\Pos{r}\in H_1$}{$dp.x=0$;\\}
	{$\Pos{r}\in R$ \textbf{Then}}
		\STATE $s_1=$ Leftmost intersection between $R\setminus H_1$ and the horizontal line through $\Pos{r}$;
		\STATE $\displaystyle dp.x=\min\left\{dp.x,\, \Pos{r}.x-s_1.x\right\}$;
	\ENDIF
	\IFiELSIF{$\Pos{r}\in H_2$}{$dp.y=0$;\\}
	{$\Pos{r}\in R$ \textbf{Then}}
		\STATE $s_2=$ Bottommost intersection between $R\setminus H_2$ and the vertical line through $\Pos{r}$;
		\STATE $\displaystyle dp.y=\min\left\{dp.y,\, \Pos{r}.y-s_2.y\right\}$;
	\ENDIF
\ENDFOR
\IFi{$dp.x>dp.y$}{$dp=\left(dp.x/2,\, 0\right)$; \textbf{Else} $dp=\left(0,\, dp.y/2\right)$;}
\end{algorithmic}}
\end{state}

\begin{state}{\Move}{
\vspace{-1em}
\begin{algorithmic}
	\STATE \MoveTo{$dp$}.
\end{algorithmic}}
\end{state}

\caption{The \NearG protocol \label{algo:1}}
\end{figure}

\begin{figure}[t]
\centering
\includegraphics[scale=1]{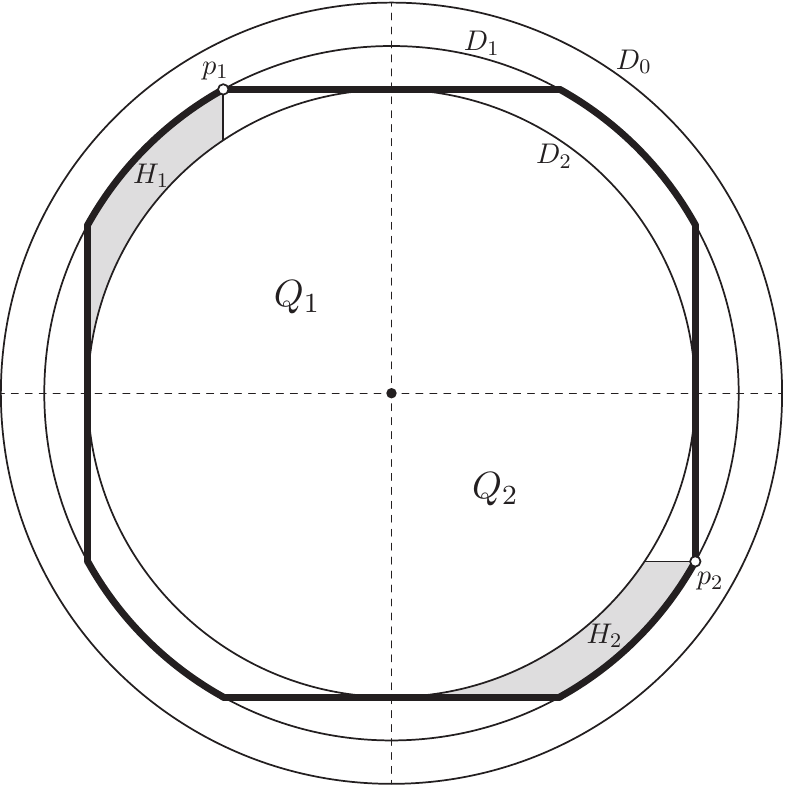}
\caption{Some of the elements computed by the \NearG protocol. The computing robot lies in the center, and the thick line represents the boundary of $R$.}
\label{fig:algob}
\end{figure}

The details of our \NearG protocol are reported in Figure~\ref{algo:1}. The protocol is executed by each robot during every \Compute phase. In the following, we denote by $r^*$ the robot that is currently executing the algorithm. The returned value is $dp$, which is the destination point for $r^*$. The algorithm computes separately the horizontal and the vertical components of the movement of $r^*$, i.e., $dp.x$ and $dp.y$. Note that the computation of the horizontal component $dp.x$ is symmetrical to the computation of the vertical component, hence any proposition that holds for the $x$ coordinate holds symmetrically for the $y$ coordinate.

Referring to the \NearG protocol and to Figure~\ref{fig:algob}, let $D_1$ and $D_2$ be the (closed) disks with radius $V-\rho/2$ and $V-\rho$, respectively, and center in the current position of $r^*$. Also, let $S$ be the closed square circumscribed around $D_2$ (with sides parallel to the $x$- and $y$-axes), and $R=D_1 \cap S$. Finally, let $H_1$ and $H_2$ be the {\em halt zones} of $r^*$, and \NW and \SE the sets of visible robots in \Qi and \Qii, respectively (note that \Qi contains its right border, but not the bottom one; similarly, \Qii contains its top border, but not the left one). 

Because no robot ever moves leftwards or downwards, we give the following definition:

\begin{definition}[Move Space]\label{def:movespace}
The {\em Move Space} of a robot $r$ at time $t$, denoted by \MS{r,t}, is the set $\left\{ (x',y')\in \mathbb R^2 \mid x'\geq r(t).x \wedge y'\geq r(t).y \right\}$.
\end{definition}

The destination point of $r^*$ is computed according to the rules below:

\begin{enumerate}
\item\label{rule1} $r^*$ only moves rightward or upward (not diagonally) at every move. It moves by the greatest possible amount, compatibly with the following restrictions (this is needed for the algorithm's convergence, see Section~\ref{sec:T}).
\item\label{rule2} $r^*$ never enters the \emph{move space} of a visible robot, unless it already is in its move space (this is required for collision avoidance, Section~\ref{sec:CA});
\item\label{rule3} $r^*$ never moves to the right of (resp.\ above) the rightmost (resp.\ topmost) robot it can see (needed for convergence, Section~\ref{sec:T});
\item\label{rule4} If $r^*$ sees a robot in the halt zone to its left (i.e., $H_1$ in Figure~\ref{fig:algob}), $r^*$ does not move rightward. Symmetrically, if $r^*$ sees a robot in the bottom halt zone (i.e., $H_2$ in Figure~\ref{fig:algob}), $r^*$ does not move upward. This is needed for the preservation of mutual awareness, see Section~\ref{sec:MA};
\item\label{rule5} If $r^*$ sees a robot $r$ in $R\setminus H_1$ (resp.\ $R\setminus H_2$), it moves so that $r$ stays inside $R\setminus H_1$ (resp.\ $R\setminus H_2$) (preservation of mutual awareness, Section~\ref{sec:MA}). Note that this does not guarantee \emph{a priori} that $r$ will actually stay inside $R\setminus H_1$ (resp.\ $R\setminus H_2$), since $r$ moves asynchronously and independently of $r^*$;
\item\label{rule6} The length of the so-computed movement is capped at $\rho/2$ (where $\rho=\min\left\{V/4,\,V-D\right\}$), and then halved (this is needed for both mutual awareness preservation and collision avoidance, see Sections~\ref{sec:MA} and~\ref{sec:CA}).
\end{enumerate}
To correctly detect termination, we make sure that $\varepsilon$ is not greater than $\rho/2$, by setting $\varepsilon'=\min\{\varepsilon, \, \rho/2\}$. This is necessary to prove Lemma~\ref{lem:lights_convergence}.

\section{Correctness\label{sec:correctness}}

In this section, we will prove that the protocol reported in Figure~\ref{algo:1} correctly solves the \NearG problem. The proof will be articulated in three parts: first, we will prove that a suitably-defined distance graph remains connected during the execution; second, we will prove that no collisions occur during the movements of the robots; finally, we show that all the robots converge to the same limit point, and correctly terminate their execution.

\subsection{Preliminary Definitions and Observations}

Before presenting the correctness proof, we will introduce a few preliminary definitions and observations. First, it is easy to observe the following:

\begin{observation}\label{obs:1}
No robot's $x$- or $y$-coordinate may ever decrease. No robot's $x$- and $y$-coordinates can both increase during the same move. Furthermore, a robot can move rightward (resp.\ upward) only if there is another robot strictly to the right of (resp.\ strictly above) its destination point.\qed
\end{observation}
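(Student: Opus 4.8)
The plan is to read all three assertions directly off the structure of the \Compute phase in Figure~\ref{algo:1}, working in the local coordinate system of the executing robot $r^*$, in which $r^*$ sits at the origin $(0,0)$. The crucial preliminary fact I would establish is that the scalars $dp.x$ and $dp.y$ computed by the protocol are always non-negative. Consider $dp.x$ (the argument for $dp.y$ is symmetric). Its value before the loop is $\min\{\min_{r\in\SE}\{\Pos{r}.x\},\ \max_{r\in\mathcal Z}\{\Pos{r}.x\},\ \rho/2\}$: here $\rho/2>0$; every robot in \SE lies in \Qii and hence has positive abscissa, so the first term is positive (or $+\infty$ if $\SE=\emptyset$); and since $r^*\in\mathcal Z$ with $\Pos{r^*}=(0,0)$, the middle term $\max_{r\in\mathcal Z}\{\Pos{r}.x\}$ is at least $0$. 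Thus $dp.x\geq 0$ at this point. The subsequent \textbf{for} loop can only lower $dp.x$, either by setting it to $0$ (harmless) when some visible robot lies in $H_1$, or by the update $dp.x=\min\{dp.x,\ \Pos{r}.x-s_1.x\}$. In the latter branch $r$ lies in $R\setminus H_1$ and on the horizontal line through $\Pos{r}$, so that line meets $R\setminus H_1$ at $\Pos{r}$ itself; hence its leftmost intersection $s_1$ satisfies $s_1.x\leq\Pos{r}.x$, and the update quantity $\Pos{r}.x-s_1.x$ is non-negative. Therefore $dp.x\geq 0$ throughout, and symmetrically $dp.y\geq 0$.

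Given this, the first two assertions follow immediately from the final assignment $dp=(dp.x/2,0)$ or $dp=(0,dp.y/2)$. In either case $dp$ has non-negative coordinates and at most one non-zero coordinate. Since the \Move phase drives $r^*$ monotonically along the straight segment from the origin to $dp$, every intermediate position has non-negative, non-decreasing abscissa and ordinate; hence no coordinate ever decreases, even if the scheduler halts the move early. Moreover, because at most one coordinate of $dp$ is non-zero, the motion is purely horizontal or purely vertical, so $r^*$ cannot increase both its $x$- and its $y$-coordinate in the same move.

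For the third assertion, suppose $r^*$ moves rightward, i.e.\ $dp=(dp.x/2,0)$ with $dp.x>0$. From the definition of $dp.x$ as a minimum, $dp.x\leq\max_{r\in\mathcal Z}\{\Pos{r}.x\}$, so this maximum is at least $dp.x>0$. Let $r'$ be a robot attaining it; since $r^*$ has abscissa $0<dp.x$, the maximizer $r'$ is necessarily distinct from $r^*$, and $\Pos{r'}.x\geq dp.x>dp.x/2$. Thus $r'$ lies strictly to the right of the destination point $(dp.x/2,0)$, which is exactly what is claimed; the upward case is symmetric.

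The only genuinely delicate point, and the one I would treat most carefully, is the non-negativity of the loop update $\Pos{r}.x-s_1.x$. This rests on the small geometric observation that in the \textbf{else} branch the robot $r$ itself certifies that the horizontal line through $\Pos{r}$ meets $R\setminus H_1$, so the leftmost such intersection cannot lie to the right of $r$. Everything else is a direct inspection of the minima and of the final case split, together with the monotone-motion guarantee of the \Move phase stated in the model.
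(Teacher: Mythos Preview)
Your argument is correct. The paper states Observation~\ref{obs:1} without proof (it is marked with a bare \qed), treating it as immediate from inspection of the \Compute phase; your write-up simply makes explicit the non-negativity of $dp.x$ and $dp.y$ (including the check that $\Pos{r}.x-s_1.x\geq 0$ in the loop), the axis-parallel form of the final $dp$, and the bound $dp.x\leq\max_{r\in\mathcal Z}\Pos{r}.x$ needed for the third clause, which is exactly the intended reading.
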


\begin{observation}\label{obs:3}
During each cycle, a robot travels a distance not greater than $\rho/4\leq V/16$.\qed
\end{observation}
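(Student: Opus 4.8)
The plan is to bound the Euclidean length of the destination vector $dp$ computed during the \Compute phase, and then invoke the fact that a robot moves monotonically along the straight segment toward $dp$; consequently the distance it actually travels in a single cycle cannot exceed $\norm{dp}$.

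First I would track the two scalars $dp.x$ and $dp.y$ through the protocol. Right after the two initial assignments, each is a minimum of several quantities, one of which is the explicit constant $\rho/2$, so $dp.x\le\rho/2$ and $dp.y\le\rho/2$. I would also record that both are nonnegative: in the expression for $dp.x$ the term $\max_{r\in\mathcal Z}\{\Pos{r}.x\}$ is at least $0$, because $r^*$ itself lies at the origin and belongs to $\mathcal Z$; the term $\min_{r\in\SE}\{\Pos{r}.x\}$ ranges over robots lying in \Qii (which has positive abscissa); and $\rho/2>0$. Therefore $0\le dp.x\le\rho/2$, and symmetrically $0\le dp.y\le\rho/2$.

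Next I would show that the \textbf{for} loop can only shrink these values. Each iteration either sets a coordinate to $0$ (when a visible robot lies in a halt zone $H_1$ or $H_2$) or replaces it by a minimum with $\Pos{r}.x-s_1.x$ (resp.\ $\Pos{r}.y-s_2.y$). Since $s_1$ is the leftmost intersection of $R\setminus H_1$ with the horizontal line through $\Pos{r}$, and $\Pos{r}\in R\setminus H_1$, we have $0\le s_1.x\le\Pos{r}.x$, so the subtracted quantity is nonnegative; hence every update keeps the coordinate within $[0,\rho/2]$. Thus after the loop we still have $0\le dp.x\le\rho/2$ and $0\le dp.y\le\rho/2$. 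The concluding conditional then replaces $dp$ by either $(dp.x/2,\,0)$ or $(0,\,dp.y/2)$, so $\norm{dp}=dp.x/2\le\rho/4$ or $\norm{dp}=dp.y/2\le\rho/4$. As the robot travels at most $\norm{dp}$, and $\rho=\min\{V/4,\,V-D\}\le V/4$, the travelled distance is at most $\rho/4\le V/16$.

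I do not expect a genuine obstacle here: the claim follows by direct inspection of the pseudocode. The only points deserving a little care are the nonnegativity of $dp.x$ and $dp.y$ (so that the chosen coordinate value really coincides with the Euclidean distance along that axis) and the fact that every loop adjustment is a nonincreasing update; both are immediate from the geometry of $R$ and of the halt zones. This is why the authors record the statement as an \emph{observation} with no detailed argument.
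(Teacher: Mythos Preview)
Your argument is correct and is exactly the direct inspection the paper implicitly relies on (the observation carries only a \qed). One small slip: the inequality $0\le s_1.x$ need not hold (for instance when $\Pos{r}$ lies on the positive $x$-axis, $s_1$ sits on the left boundary of $R$ and has negative abscissa), but you do not actually use it---what you need, and what does hold, is $s_1.x\le\Pos{r}.x$, so that $\Pos{r}.x-s_1.x\ge 0$ and the $\min$ update preserves $dp.x\in[0,\rho/2]$.
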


We may assume that, in the last line of the algorithm, if $dp.x$ and $dp.y$ are equal, then one of the two values $(dp.x/2,\, 0)$ and $(0,\, dp.y/2)$ is chosen arbitrarily as the destination point $dp$. With this assumption, the following holds.

\begin{observation}\label{obs:sym}
The algorithm is symmetric with respect to $x$- and $y$-coordinates.\qed
\end{observation}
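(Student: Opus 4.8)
The plan is to make precise what \emph{symmetric} means and then verify it by a line-by-line inspection of the \Compute phase. Let $\tau:\mathbb R^2\to\mathbb R^2$ be the reflection across the diagonal, $\tau(x,y)=(y,x)$. The statement I want to establish is that the computation commutes with $\tau$: if the input snapshot $\{\Pos{r}\}_{r\in\mathcal Z}$ produces destination $dp$, then feeding the reflected snapshot $\{\tau(\Pos{r})\}_{r\in\mathcal Z}$ to the same code produces $\tau(dp)$. Since the \Look phase reports Euclidean positions and the \Move phase merely moves to $dp$, establishing this equivariance for \Compute is enough to conclude that the whole algorithm is symmetric, so that every claim proven about the horizontal component transfers verbatim to the vertical one.

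First I would check that all quantities not intrinsically tied to one axis are $\tau$-invariant. The scalars $\rho$ and $\varepsilon'$ depend on none of the coordinates; the visible set $\mathcal Z$ and the termination test depend only on pairwise Euclidean distances, which $\tau$ preserves; and the disks $D_0,D_1,D_2$, the circumscribed square $S$, and hence $R=D_1\cap S$ are all centered at the origin and invariant under reflection across $y=x$. Next I would verify that the remaining objects occur in mirror pairs interchanged by $\tau$, namely $p_1\leftrightarrow p_2$, $\Qi\leftrightarrow\Qii$, $H_1\leftrightarrow H_2$, $\NW\leftrightarrow\SE$, and, inside the loop, $s_1\leftrightarrow s_2$. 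The only point requiring care here is the half-open boundary convention: $\Qi$ includes its right border but not its bottom one, while $\Qii$ includes its top border but not its left one. Under $\tau$ the right border of $\Qi$ (the positive $y$-axis) maps exactly to the top border of $\Qii$ (the positive $x$-axis), and the excluded borders match as well, so $\tau(\Qi)=\Qii$ \emph{as sets}, leaving no ambiguity for robots lying on an axis.

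With these pairings in hand, the core of the argument is to read off that the code computing $dp.x$ is carried, term by term, to the code computing $dp.y$: each use of an $x$-coordinate, of $\SE$, of $H_1$, and of $s_1$ in the horizontal branch corresponds under $\tau$ to the use of the matching $y$-coordinate, $\NW$, $H_2$, and $s_2$ in the vertical branch, and the surrounding $\min$/$\max$ structure is identical. Consequently, for the reflected input the value computed into the variable $dp.x$ equals the value computed into $dp.y$ for the original input, and symmetrically.

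I expect the one genuine obstacle to be the final selection line. It outputs $(dp.x/2,0)$ when $dp.x>dp.y$ and $(0,dp.y/2)$ otherwise; on the reflected input the two components are exchanged, so the test $dp.x>dp.y$ becomes $dp.y>dp.x$. When $dp.x\neq dp.y$ this flips the chosen axis precisely as $\tau$ demands and yields $\tau(dp)$. The delicate case is $dp.x=dp.y$, where a literal reading of the code would always take the \emph{else} branch and so break equivariance; this is exactly why the tie-breaking assumption stated just before the observation is needed, allowing either $(dp.x/2,0)$ or $(0,dp.y/2)$ to be selected. Reading symmetry at the level of admissible outputs in this tie case, the assumption renders the computation $\tau$-equivariant and completes the verification.
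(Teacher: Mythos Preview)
Your proposal is correct and matches the paper's approach: the paper states the observation with a bare \qed, relying only on the tie-breaking remark immediately preceding it, and your line-by-line verification of $\tau$-equivariance is precisely the natural elaboration of that remark. In particular, your identification of the tie case $dp.x=dp.y$ as the sole obstruction, resolved by the stated assumption that either output is admissible, is exactly the point the paper singles out.
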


\begin{definition}[First and Last]
Given a robot $r$,
let $\First{r,t} = \min\{t' > t| r  \in
{\mathbb L(t')} \}$ be the first time, after time $t$, at which
$r$ performs a \Look operation. Also, 
let $\Last{r,t} = \max\{t' \leq t| r  \in {\mathbb L(t')} \}$
be the last time, from the beginning up to time $t$, at which $r$ has performed a \Look operation; if $r$ has not performed a \Look yet, then $\Last{r,t}=0$.
\end{definition}

Now, we define the {\em destination point} of a robot at a time $t$ as follows:

\begin{definition}[Destination Point]
Given a  robots $r$, we define the {\em destination point} $\DP{r,t}$ of $r$ at time $t$ as follows: 
\begin{itemize}
\item If $r \in {\mathbb L(t)}$, then $\DP{r,t}$ is the point $dp$ as computed in the next \Compute phase after $t$ (in the current cycle).
\item If $r \in {\mathbb C(t)}$, then $\DP{r,t}$ is the point $dp$ as computed in the current \Compute phase.
\item If $r \in {\mathbb M(t)}$, then $\DP{r,t}$ is the point $dp$ as computed in the last \Compute phase before $t$ (in the current cycle).
\end{itemize}
\end{definition}

From the previous definition, we can state the following:

\begin{observation}
\label{obs:dpInvariant}
Let $r$ be a robot. During the time strictly between two consecutive \Look{s}, the destination point of $r$ does not change.
\end{observation}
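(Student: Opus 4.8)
The plan is to derive the statement directly from the definition of $\DP{r,t}$ together with the elementary structure of a single LCM cycle. First I would fix a robot $r$ and two of its consecutive \Look times $t_1<t_2$, by which I mean $r\in\mathbb L(t_1)$, $r\in\mathbb L(t_2)$, and $r\notin\mathbb L(t)$ for every $t\in(t_1,t_2)$; thus for any such $t$ we have $\Last{r,t}=t_1$ and $\First{r,t}=t_2$. The goal is to show that the point $\DP{r,t}$ is the same for all $t\in(t_1,t_2)$.

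The crucial structural observation is that $r$ performs exactly one \Compute phase during $(t_1,t_2)$, namely the \Compute phase of the cycle opened by the \Look at $t_1$. Indeed, within a single cycle each \Compute phase is immediately preceded by the \Look phase of that same cycle; hence a \Compute phase of $r$ taking place after $t_1$ but belonging to a later cycle would have to be preceded by a \Look occurring strictly after $t_1$, which—since $t_2=\First{r,t}$ is the first \Look after $t_1$—cannot happen before $t_2$. This remains true even if the scheduler interrupts the \Move phase, because by the model an interruption forces a new cycle starting with a fresh \Look, and such a \Look could not lie in $(t_1,t_2)$ without contradicting the consecutiveness of $t_1$ and $t_2$. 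Consequently there is a single destination value, call it $dp^\ast$, computed by $r$ anywhere in $(t_1,t_2)$.

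It then remains to run the case analysis of the definition of $\DP{r,t}$ at an arbitrary $t\in(t_1,t_2)$. The case $r\in\mathbb L(t)$ is vacuous, as no \Look occurs in the open interval. If $r\in\mathbb C(t)$, then $\DP{r,t}$ is the value computed in the current \Compute phase, which is the unique one and hence equals $dp^\ast$. If $r\in\mathbb M(t)$, then $\DP{r,t}$ is the value computed in the last \Compute phase before $t$ in the current cycle, and since that cycle is the one opened at $t_1$ with its single \Compute phase, this value is again $dp^\ast$. Thus $\DP{r,t}=dp^\ast$ in every applicable case, proving the claim.

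The only point requiring a little care is whether the three phases exhaust the interval, since after completing its \Move the robot may remain idle until its next \Look. I would resolve this by adopting the natural convention that, once it has reached its computed destination, the robot stays in its \Move phase (performing a null movement) until it next executes a \Look; under this reading the idle period falls under the \Move case above, so no gap arises and $\DP{r,t}$ is defined and equal to $dp^\ast$ throughout $(t_1,t_2)$. I expect no genuine difficulty beyond this bookkeeping, as the whole statement is essentially an unwinding of the definition given the one-\Compute-per-cycle structure.
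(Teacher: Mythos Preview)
Your proposal is correct and follows essentially the same approach as the paper: both arguments simply unwind the definition of $\DP{r,t}$ across the \Look--\Compute--\Move phases of a single cycle. Your treatment is considerably more thorough (the paper's proof is a two-sentence sketch that does not explicitly address the idle period or the case split), but the underlying idea is identical.
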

\begin{proof}
Let $t$ be any time when $r$ executes a \Look; then, by definition, $\DP{r,t}$ is the point $dp$ as computed in the next \Compute phase after $t$ (in the current cycle). Also, the destination point does not change in the next \Compute and \Move phases of $r$. 
\end{proof}

The following proposition states a straightforward geometric fact (refer also to Figure~\ref{fig:algob}): among the segments contained in the annulus $D_1\setminus D_2$, with one endpoint on the boundary of $D_1$ and the other endpoint on the boundary of $D_2$, the shortest are those that are collinear with the center of $D_2$. This will be often used in conjunction with Observation~\ref{obs:3}, to show that robots cannot lose visibility to each other under certain conditions.

\begin{proposition}\label{prop1}
The length of a segment contained in the annulus $D_1\setminus D_2$, with one endpoint on the boundary of $D_1$ and the other endpoint on the boundary of $D_2$, is at least $\rho/2$.
\end{proposition}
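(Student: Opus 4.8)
The plan is to reduce the statement to a single application of the reverse triangle inequality about the common center of the two disks. Let $O$ denote that center (the current position of $r^*$), and write $R_1=V-\rho/2$ and $R_2=V-\rho$ for the radii of $D_1$ and $D_2$, respectively. Let the segment in question have endpoints $A$ on the boundary of $D_2$ and $B$ on the boundary of $D_1$, so that $\dist(A,O)=R_2$ and $\dist(B,O)=R_1$. The geometric content is simply that the two endpoints lie on concentric circles whose radii differ by exactly $R_1-R_2=(V-\rho/2)-(V-\rho)=\rho/2$, and that a chord joining the circles can be no shorter than this radial gap.

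The key step is then immediate: applying the reverse triangle inequality to the vectors $B-O$ and $A-O$ gives
$$\dist(A,B)=\norm{(B-O)-(A-O)}\geq\bigl|\,\norm{B-O}-\norm{A-O}\,\bigr|=R_1-R_2=\rho/2,$$
which is exactly the claimed bound. This also pins down the intuition stated just before the proposition: equality holds precisely when $O$, $A$, and $B$ are collinear with $O$ not lying between $A$ and $B$, i.e.\ for the radial segments, and every other admissible segment is strictly longer.

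I would also remark that the hypothesis that the segment be contained in the annulus $D_1\setminus D_2$ is not actually needed for the lower bound; the inequality above uses only the two distances from the endpoints to $O$. The containment plays a role only in how the proposition is later invoked, where the relevant segment arises as the portion of a chord lying inside the annulus. Accordingly, I expect no real obstacle in this proof: the single point to verify is that the radii are correctly identified and that $D_2$ is nondegenerate, i.e.\ $R_2=V-\rho>0$, which holds because $\rho=\min\{V/4,\,V-D\}\leq V/4$ forces $V-\rho\geq 3V/4>0$. Everything else follows directly from the reverse triangle inequality.
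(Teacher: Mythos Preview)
Your proof is correct and is genuinely simpler than the paper's. The paper reduces by rotational symmetry to vertical segments, writes the length as $f(x)=\sqrt{(V-\rho/2)^2-x^2}-\sqrt{(V-\rho)^2-x^2}$, observes $f(0)=\rho/2$, and then differentiates to show $f$ is increasing on $[0,V-\rho]$. Your argument bypasses all of this: the reverse triangle inequality applied to the two radius vectors gives $\dist(A,B)\geq R_1-R_2=\rho/2$ in one line, and as you note it does not even require the segment to lie in the annulus. What the paper's calculus approach buys is an explicit monotone expression for the vertical gap as a function of $x$, but that extra information is never used elsewhere in the paper; only the lower bound $\rho/2$ is invoked (always in conjunction with Observation~\ref{obs:3}), so nothing is lost by your shorter route.
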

\begin{proof}
Due to the rotational symmetry of the annulus, it is enough to prove the proposition for vertical segments only. The claim is equivalent to saying that, if $x\in[0,\, V-\rho]\subset\mathbb R$, then $$\sqrt{(V-\rho/2)^2-x^2}-\sqrt{(V-\rho)^2-x^2}\geq \rho/2.$$ Let $f(x)=\sqrt{(V-\rho/2)^2-x^2}-\sqrt{(V-\rho)^2-x^2}$. Then, $f(0)=\rho/2$, and $f(x)$ is monotonically increasing on $[0,\, V-\rho]$. Indeed, the derivative of $f(x)$ on $(0,\, V-\rho)$ is $$\frac{d}{dx}f(x)=x\left(\frac{1}{\sqrt{(V-\rho)^2-x^2}}-\frac{1}{\sqrt{(V-\rho/2)^2-x^2}}\right),$$
which is positive.
\end{proof}

Let \Qi be defined as in the \NearG protocol reported in Figure~\ref{algo:1}; in the following, we will denote by $\Qi(r,t)$ the set \Qi as robot $r$ would compute it if it were in a \Compute phase at time $t$ (this set is expressed in the global coordinate reference system). A similar notation will be used for the other sets and points computed in our protocol (e.g., $D_0$, $D_1$, $\Qii$, etc.).

\subsection{Preservation of Mutual Awareness\label{sec:MA}}

We define yet another notion of distance graph on the robots. This is useful, because in Corollary~\ref{cor:connectvisible} we will prove that this graph remains connected throughout the execution of our \NearG protocol.
\begin{definition}[Intermediate Distance Graph]
The {\em intermediate distance graph} at time $t\geq 0$ is the graph $G(t) = (\setRobots, E(t))$ such that, for any two distinct robots $r$ and $s$, $\{r,s\} \in E(t)$ if and only if $r$ and $s$ are at distance not greater than $V-\rho/2$ at time $t$, i.e., $\dist(r(t),\,s(t)) \leq V-\rho/2$, where $\rho=\min\{V/4,V-D\}$.
\end{definition}

Recall that, by assumption, the initial strong distance graph $J$ is connected. This implies that $G(0)$ is connected, because $V-\rho/2>D$, and hence $J\subseteq G(0)$. We will now prove that the connectedness of the intermediate distance graph is preserved during the entire execution of the algorithm. We will do so after introducing the notion of {\em mutual awareness}, in Definition~\ref{def:ma}.

First we define the auxiliary relation $\AW{p,q}$.

\begin{definition}
Given two points $p,q\in \mathbb R^2$, we denote by $\AW{p,q}$ the (symmetric) relation\footnote{By $\norm{a}_2=\sqrt{a.x^2+a.y^2}$ we denote the usual Euclidean norm; by $\norm{a}_\infty = \max\left\{|a.x|,\, |a.y|\right\}$ we denote the infinity norm.}
$$\norm{p-q}_2\leq V-\rho/2\ \wedge\ \norm{p-q}_\infty\leq V-\rho.$$
\end{definition}

A simple fact to observe is the following (recall that $r(t)$ denotes the position of robot $r$ at time $t$).
\begin{observation}
\label{obs:AWequiv}
For any two robots $r$ and $s$, $\AW{r(t),\, s(t)}$ is equivalent to $s(t)\in R(r,t)$, which is equivalent to $r(t)\in R(s,t)$.\qed
\end{observation}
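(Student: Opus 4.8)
The plan is to prove Observation~\ref{obs:AWequiv} by unwinding the definitions of the relevant sets in the \NearG protocol and matching them against the definition of \AW{\cdot,\cdot}. Since the observation asserts two equivalences, and the second one ($s(t)\in R(r,t)\Leftrightarrow r(t)\in R(s,t)$) follows from the first by symmetry of the relation $\AW{\cdot,\cdot}$, the core task is to show that $\AW{r(t),s(t)}$ holds exactly when $s(t)\in R(r,t)$. Recall that $R=D_1\cap S$, where $D_1$ is the closed disk of radius $V-\rho/2$ centered at $r^*$, and $S$ is the closed square circumscribed around $D_2$ (the disk of radius $V-\rho$), with sides parallel to the axes. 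The entire argument is local to the computing robot $r$, so I would work in $r$'s coordinate system, where $r$ sits at the origin and $s$ is at the vector $p=s(t)-r(t)$.

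First I would observe that membership in the disk $D_1$ is, by definition, the condition that the Euclidean distance from the center to the point is at most the radius, i.e. $\norm{p}_2\leq V-\rho/2$; this is precisely the first conjunct of \AW{r(t),s(t)}. Next I would observe that the circumscribed square $S$ around $D_2$ is the axis-aligned square whose half-side equals the radius $V-\rho$ of $D_2$, so $S=\{(x,y):|x|\leq V-\rho \wedge |y|\leq V-\rho\}$. Membership $p\in S$ is therefore equivalent to $\max\{|p.x|,|p.y|\}\leq V-\rho$, which is exactly $\norm{p}_\infty\leq V-\rho$, the second conjunct of \AW{r(t),s(t)}. Combining the two, $p\in R=D_1\cap S$ holds if and only if both conjuncts of \AW{r(t),s(t)} hold, which is the claim.

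I would then convert back to the global reference system: since $R(r,t)$ denotes the set $R$ as $r$ would compute it at time $t$, expressed in global coordinates, the point $s(t)$ lies in $R(r,t)$ precisely when its position relative to $r$, namely $s(t)-r(t)$, lies in the locally-computed $R$. Finally, since the relation \AW{p,q} is manifestly symmetric in $p$ and $q$ (both $\norm{p-q}_2$ and $\norm{p-q}_\infty$ are symmetric), the equivalence $\AW{r(t),s(t)}\Leftrightarrow s(t)\in R(r,t)$ applied with the roles of $r$ and $s$ swapped immediately yields $\AW{s(t),r(t)}\Leftrightarrow r(t)\in R(s,t)$, and since $\AW{r(t),s(t)}=\AW{s(t),r(t)}$, all three conditions are equivalent.

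The proof is essentially a bookkeeping exercise, so there is no genuine obstacle; the only point requiring a moment of care is the explicit identification of the circumscribed square. One must verify that \emph{circumscribed around} $D_2$ means the square has its sides tangent to $D_2$ (half-side $=$ radius of $D_2=V-\rho$), rather than being inscribed, so that the infinity-norm bound comes out as $V-\rho$ and not, say, $(V-\rho)\sqrt{2}$. Once that identification is pinned down against Figure~\ref{fig:algob}, the two conjuncts of \AW{\cdot,\cdot} line up with $D_1$ and $S$ exactly, and the symmetry closes the argument.
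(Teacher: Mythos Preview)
Your proposal is correct. The paper actually gives no proof for this observation (it is stated with a terminal \qed), treating it as immediate from the definitions; your explicit unwinding of $R=D_1\cap S$ into the two norm conditions and the appeal to the symmetry of $\AW{\cdot,\cdot}$ is exactly the intended justification.
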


Recall that, according to the algorithm, if a robot $r$ sees a robot $s$ in $R$, it will make its next move in such a way that $s$, as it was observed, does not exit $R$ (see how $s_1$ and $s_2$ are computed in the algorithm). This is stated in the next observation.
\begin{observation}
\label{obs:AWR}
If $r$ and $s$ are two robots, $r\in\mathbb L(t)$ and $\AW{r(t),\, s(t)}$, then $\AW{\DP{r,t},\, s(t)}$.\qed
\end{observation}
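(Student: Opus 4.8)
The plan is to restate the conclusion geometrically and then read it off the algorithm. Working in the local frame of $r^*=r$ at time $t$ (so $r(t)$ is the origin and $s(t)$ is the observed point $\Pos{s}$), Observation~\ref{obs:AWequiv} lets me replace the relation $\AW{\DP{r,t},\,s(t)}$ by the single containment $\Pos{s}-dp\in R$, where $dp$ is the displacement $r$ computes in the \Compute phase following the \Look at time $t$; the hypothesis $\AW{r(t),\,s(t)}$ likewise becomes $\Pos{s}\in R$. By the last line of the protocol, $dp$ is one of $(dp.x/2,0)$, $(0,dp.y/2)$, or $(0,0)$. The null case is immediate, and the two remaining cases are interchanged by Observation~\ref{obs:sym}, so it suffices to treat a purely horizontal move, $dp=(dp.x/2,0)$ with $dp.x>0$.

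First I would note that $s$ is genuinely visible to $r$ at time $t$: since $\AW{r(t),\,s(t)}$ gives $\norm{r(t)-s(t)}_2\le V-\rho/2<V$, the robot $s$ belongs to $\mathcal Z$ and is therefore processed in the \textbf{for} loop. If $\Pos{s}\in H_1$, that iteration sets $dp.x=0$; as $dp.x$ never increases afterwards, we would obtain $dp.x=0$, contradicting $dp.x>0$. Hence $\Pos{s}\in R\setminus H_1$, so $s$ falls into the branch guarded by $\Pos{r}\in R$, which caps $dp.x\le \Pos{s}.x-s_1.x$, where $s_1$ is the leftmost point of $R\setminus H_1$ on the horizontal line through $\Pos{s}$.

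The geometric heart of the argument is that $R\setminus H_1$ meets each horizontal line in a single interval extending rightward, whose left endpoint is exactly $s_1$. Granting this, lying in $R\setminus H_1$ along that line is equivalent to having abscissa at least $s_1.x$. Because $dp.x\ge 0$ and the cap gives $dp.x\le \Pos{s}.x-s_1.x$, I get $\Pos{s}.x-dp.x/2\ge \Pos{s}.x-dp.x\ge s_1.x$, so the shifted point $\Pos{s}-(dp.x/2,0)$ still lies in $R\setminus H_1\subseteq R$. By Observation~\ref{obs:AWequiv} this is precisely $\AW{\DP{r,t},\,s(t)}$, finishing the horizontal case; the vertical case follows symmetrically with $H_2$ and $s_2$.

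The step I expect to be the main obstacle is justifying the ``single right-extending interval'' claim for $R\setminus H_1$. This requires unwinding the definitions of $R=D_1\cap S$ and of the halt zone $H_1$ (the portion of $(R\setminus D_2)\cap\Qi$ lying to the left of the threshold abscissa $p_1.x$), and verifying that deleting $H_1$ removes only a leftmost sub-chord of each horizontal section of $R$, so that the section stays an interval with $s_1$ as its new left endpoint. Everything else---the visibility of $s$, the reduction to one coordinate, and the monotonicity gained by halving the move---is routine once this geometric fact is in place.
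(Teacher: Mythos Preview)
Your argument is correct and is exactly the verification the paper leaves implicit: the observation is stated with a bare \qed, the only justification being the preceding sentence pointing to how $s_1$ and $s_2$ are computed in the \textbf{for} loop so that an observed robot in $R$ stays in $R$ after the move. You have simply unpacked that one-line rationale in full detail, including the horizontal-interval structure of $R\setminus H_1$ that makes the cap $dp.x\le\Pos{s}.x-s_1.x$ do the job.
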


Before introducing the next lemmas, let us recall that $D_1$ and $D_2$ are the closed disks with radius $V-\rho/2$ and $V-\rho$, respectively, and center in $(0,0)$; $S$ is the full closed square circumscribed around $D_2$ with sides parallel to the $x$- and $y$-axes; and that $R=D_1 \cap S$ (refer to the \NearG protocol, and to Figure~\ref{fig:algob}).

The next two lemmas are technical, and will be used in the proof of Lemma~\ref{lem:aware}.

\begin{lemma}\label{lem:aw1}
Let two robots $r$ and $s$ be given, with $r\in\mathbb L(t)$. If $\AW{r(t),\, s(t)}$ and $\AW{r(t),\, \DP{s,t}}$, then $\AW{\DP{r,t},\, s(t)}$ and $\AW{\DP{r,t},\, \DP{s,t}}$.
\end{lemma}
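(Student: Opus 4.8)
My plan is to establish the two conclusions separately. The first one, $\AW{\DP{r,t},\, s(t)}$, is immediate: it is exactly the content of Observation~\ref{obs:AWR} applied to the hypothesis $\AW{r(t),\, s(t)}$. The whole difficulty lies in the second conclusion, $\AW{\DP{r,t},\, \DP{s,t}}$, where the point $\DP{s,t}$ is not a robot that $r^*$ actually observes, so Observation~\ref{obs:AWR} cannot be invoked directly. By Observation~\ref{obs:sym} I may assume without loss of generality that $r$ moves rightward, say by $\delta = \DP{r,t}.x - r(t).x \ge 0$. I will work in the frame centered at $r(t)$, writing $q = s(t) - r(t)$ and $\DP{s,t} - r(t) = q + v$; by Observations~\ref{obs:1} and~\ref{obs:3} the displacement $v$ has nonnegative coordinates, at most one of which is nonzero, and $\norm{v}_2 \le \rho/4$. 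In this frame, by the definition of $\AW{\cdot}$ together with Observation~\ref{obs:AWequiv}, the two hypotheses and the already-proved first conclusion state precisely that the points $q$, $q+v$ and $q - (\delta,0)$ all lie in $R$, and the goal is to show $q + v - (\delta,0) \in R$ as well. Since $R = D_1 \cap S$ is an intersection of a disk and a square, it is convex; I will treat its two defining constraints, $\norm{\cdot}_\infty \le V-\rho$ (the square $S$) and $\norm{\cdot}_2 \le V-\rho/2$ (the disk $D_1$), one at a time.

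The square constraint is the easy half and holds in every case. The $y$-coordinate of the target equals $(q+v).y$, which is bounded by $V-\rho$ because $q+v \in R$. Its $x$-coordinate is at most $(q+v).x \le V-\rho$ (again from $q+v\in R$, using $\delta \ge 0$) and at least $(q-(\delta,0)).x \ge -(V-\rho)$ (from $q-(\delta,0) \in R$, using $v.x \ge 0$). Hence $\norm{q+v-(\delta,0)}_\infty \le V-\rho$.

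For the disk constraint I will first dispose of two easy cases. If $v$ is horizontal, then $q-(\delta,0)$, the target, and $q+v$ are collinear on the horizontal line through $q$, and the target lies between the first and the third; since both of these are in $R$ and $R$ is convex, so is the target. If instead $q \in D_2$, i.e.\ $\norm{q}_2 \le V-\rho$, then the target is obtained from $q$ by the displacement $(-\delta,\, v.y)$, whose length is at most $\sqrt{2}\,\rho/4 < \rho/2$ by Observation~\ref{obs:3}; the triangle inequality then gives $\norm{q+v-(\delta,0)}_2 \le (V-\rho) + \rho/2 = V-\rho/2$, as required.

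The remaining and genuinely hard case is $v$ vertical with $q \in R \setminus D_2$, and this is where I expect the real work. The plan is to use the identity $\norm{q+v-(\delta,0)}_2^2 = \norm{q-(\delta,0)}_2^2 + 2\,q.y\,v.y + v.y^2$, which, since $\norm{q-(\delta,0)}_2 \le V-\rho/2$ by the first conclusion, reduces the goal to the scalar inequality $2\,q.y\,v.y + v.y^2 \le (V-\rho/2)^2 - \norm{q-(\delta,0)}_2^2 =: B$. From $q+v \in D_1$ one gets $2\,q.y\,v.y + v.y^2 \le (V-\rho/2)^2 - \norm{q}_2^2 =: C$, and a short computation gives $B - C = \delta(2\,q.x - \delta)$; hence if $q.x \ge \delta/2$ we are done, and the halving built into the protocol (rule~\ref{rule6}) is what guarantees this comfortably whenever $q$ is not essentially to the left of $r$. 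The only surviving obstruction is $q.x < \delta/2$ together with $q \in R\setminus D_2$: if $q.y \le 0$, then $\norm{q}_2 > V-\rho$ with small $q.x$ forces $|q.y|$ close to $V-\rho$, so $q.y \le -v.y/2$ and the left-hand side is already nonpositive; while if $q.y > 0$, then $q$ is pinned near the top edge $V-\rho$, and the square constraint $(q+v).y \le V-\rho$ forces $v.y$ to be extremely small (of order $\rho^2/V$), whereas $B$ is of order $\rho V$---a gap that is guaranteed by the standing assumption $\rho \le V/4$ and can be made quantitative via the annulus-thickness bound of Proposition~\ref{prop1}. Making this last estimate rigorous, and carefully delimiting exactly when $q.x < \delta/2$ can co-occur with $q\in R\setminus D_2$ and a positive rightward move, is the step I expect to be the crux of the proof.
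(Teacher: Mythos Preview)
Your reduction and the handling of the square constraint, the parallel-motion case, and the case $q\in D_2$ are all fine and match the paper's argument (modulo the harmless difference that you normalize $r$'s direction while the paper normalizes $s$'s). The gap is exactly where you flag it, but the heuristic you sketch for the ``crux'' does not go through.

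The error is that you repeatedly read ``$q.x<\delta/2$'' as ``$|q.x|$ is small.'' It is not: since $\delta>0$ the inequality $q.x<\delta/2$ is satisfied whenever $q.x\le 0$, and with $r$ moving rightward and $q\in (R\setminus D_2)\cap Q_1$ (which is compatible with $\delta>0$ as long as $q\notin H_1$, i.e.\ $q.x\ge p_1.x$), the abscissa $q.x$ can be as negative as $p_1.x$, which has magnitude on the order of $\sqrt{\rho V}$. At such $q.x$ the lower edge of the annulus at that abscissa is well below $V-\rho$, so $q.y$ is \emph{not} pinned near $V-\rho$, the square constraint does not squeeze $v.y$, and $v.y$ can be the full $\rho/4$. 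Your order-of-magnitude comparison ``$v.y\sim\rho^2/V$ versus $B\sim\rho V$'' therefore collapses. The analogous flaw occurs in your $q.y\le 0$ sub-case: with $q.x$ very negative, $|q.y|$ can be small, so $q.y\le -v.y/2$ need not hold.

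What is missing is the one structural fact the paper exploits: the algorithm guarantees not merely $q-(\delta,0)\in R$ but $q-(\delta,0)\in R\setminus H_1$ (Rule~\ref{rule5}, via the point $s_1$). This yields a clean dichotomy on $q.x-\delta$ versus $p_1.x$. If $q.x-\delta<p_1.x$, then being in $R\setminus H_1$ and in $Q_1$ forces $q-(\delta,0)\in D_2$, and Proposition~\ref{prop1} (or the triangle inequality) pushes $q+v-(\delta,0)$ into $D_1$. If $q.x-\delta\ge p_1.x$, then $|q.x-\delta|\le|p_1.x|$ and (from the hypothesis $q+v\in R$) $(q+v).y\le V-\rho$, whence
\[
\norm{q+v-(\delta,0)}_2^2\le p_1.x^2+(V-\rho)^2=(V-\rho/2)^2.
\]
This is the step your scalar-inequality route cannot reproduce without invoking $H_1$ and $p_1$; the halving in Rule~\ref{rule6} is not the operative ingredient here.
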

\begin{proof}
From Observation~\ref{obs:AWR} it immediately follows that $\AW{\DP{r,t},\, s(t)}$. Next we prove that $\AW{\DP{r,t},\, \DP{s,t}}$.

Without loss of generality we may assume that $s$ is not moving horizontally at time $t$, that is, $s(t).x=\DP{s,t}.x$ and $0\leq \DP{s,t}.y-s(t).y \leq\rho/4$ (cf.~Observations~\ref{obs:1}--\ref{obs:sym}). Let $\Delta=\DP{r,t}-r(t)$; first observe that $\AW{\DP{r,t},\, \DP{s,t}}$ is equivalent to $\AW{r(t),\, \DP{s,t}-\Delta}$. Hence we have to prove that the point $\DP{s,t}-\Delta$ lies in $R(r,t)=R$, provided that $s(t)$ and $\DP{s,t}$ do.

If $\Delta$ is the null vector, there is nothing to prove. So, let us assume first that $\Delta.x=0$ and $\Delta.y>0$. Referring to Figure~\ref{fig:MAa}, and by the convexity of $R$, it is sufficient to prove that $s(t)-\Delta$ lies in $R$, which is equivalent to $\AW{\DP{r,t},\, s(t)}$, which has already been proven.

\begin{figure}[ht]
\centering
\subfigure[]{\label{fig:MAa}\includegraphics[scale=1]{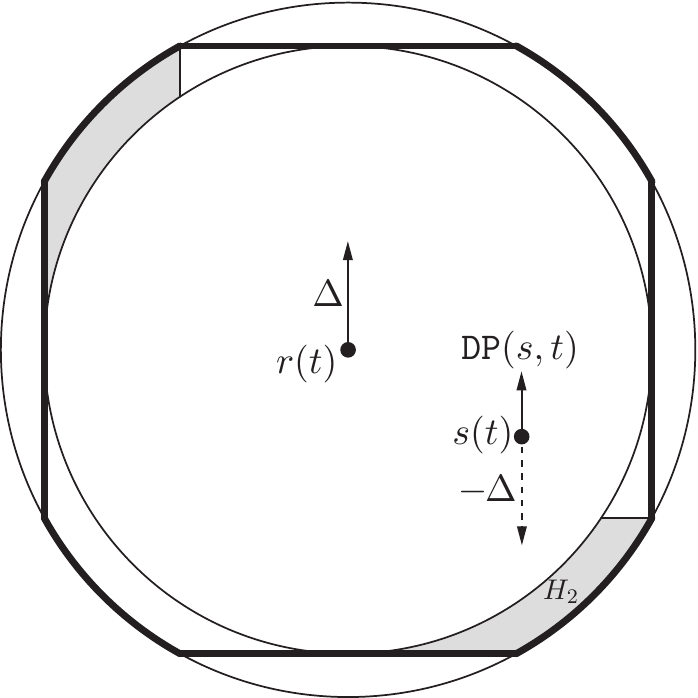}}\qquad
\subfigure[]{\label{fig:MAb}\includegraphics[scale=1]{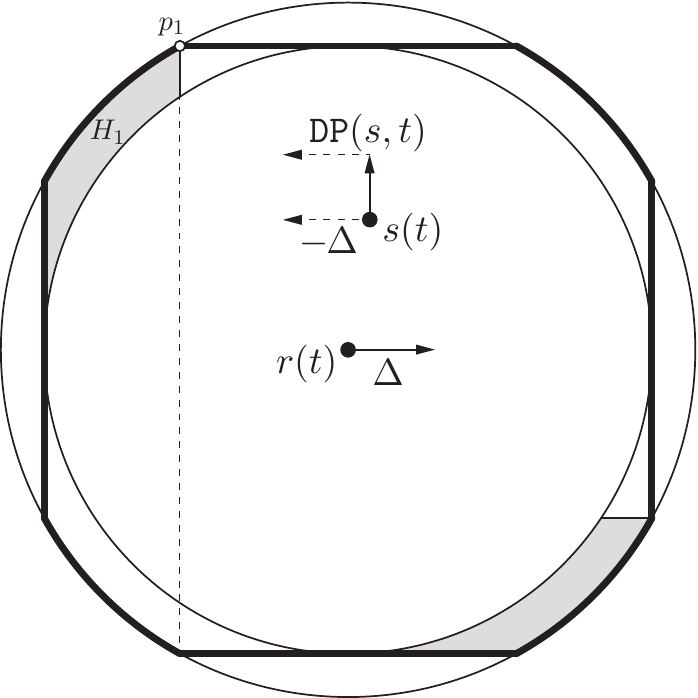}}
\caption{Proof of Lemma~\ref{lem:aw1}. The thick line is the border of $R$. In~(a), $r$ moves vertically. In~(b), $r$ moves horizontally and $s$ is to the right of $r$ at time $t$.}
\label{fig:MA}
\end{figure}

Otherwise, $\Delta.x>0$ and $\Delta.y=0$. Referring to Figure~\ref{fig:MAb}, if $s(t).x\geq r(t).x$, our claim that $\DP{s,t}-\Delta$ lies in $R(r,t)$ is trivially true, due to Proposition~\ref{prop1} and recalling that $\Delta.x\leq \rho/4$: indeed, $s(t)$ and $\DP{s,t}$ move leftward in the coordinate system of $r$ by at most $\rho/4$, hence they stay to the right of $p_1$. Moreover, $s(t)$ cannot lie in $H_1$ or else $r$ would not move rightward.

\begin{figure}[ht]
\centering
\subfigure[]{\label{fig:MAc}\includegraphics[scale=1]{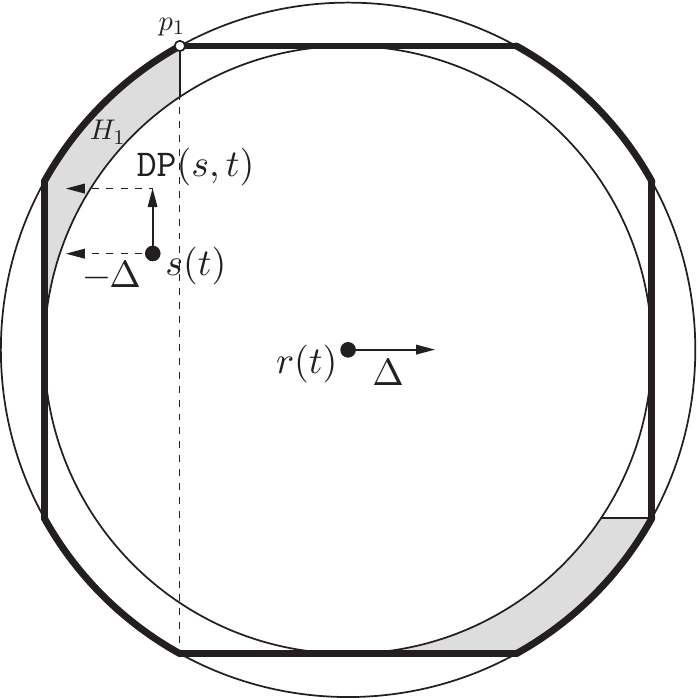}}\qquad
\subfigure[]{\label{fig:MAd}\includegraphics[scale=1]{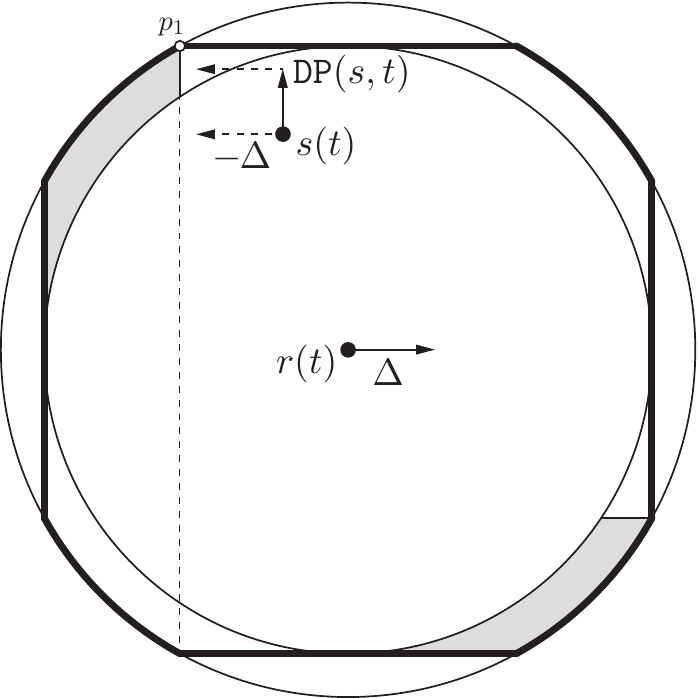}}
\caption{Proof of Lemma~\ref{lem:aw1}. The thick line is the border of $R$. In~(a), $r$ moves horizontally and $s$ is to the left of $p_1$ at time $t$. In~(b), $s$ is to the right of $p_1$ at time $t$.}
\label{fig:MA2}
\end{figure}

The only case left is that in which $s(t)$ belongs to $R\setminus H_1$ and lies to the left of $r(t)$. Recall that, according to the algorithm, $s(t)-\Delta$ belongs to $R\setminus H_1$ as well. Since $\DP{s,t}.y-s(t).y\leq \rho/4$, and due to Proposition~\ref{prop1}, it is clear that $\DP{s,t}-\Delta$ lies in $R$, provided that $s(t)-\Delta$ lies to the left of $p_1$ (see Figure~\ref{fig:MAc}). Otherwise (see Figure~\ref{fig:MAd}), if $p_1.x\leq s(t).x-\Delta.x<r(t).x$, the claim follows from the fact that $\DP{s,t}.y\leq p_1.y$ (because by assumption $\AW{r(t),\, \DP{s,t}}$), and therefore $\DP{s,t}-\Delta$ lies below $p_1$ and to its right.
\end{proof}

\begin{lemma}\label{lem:aw2}
Let two robots $r$ and $s$ be given, with $r\in\mathbb L(t_r)$ and $t_s=\Last{s,t_r}$. If $\AW{r(t_s),\, s(t_s)}$ and $\AW{r(t_r),\, s(t_r)}$, then $\AW{r(t_r),\, \DP{s,t_r}}$.
\end{lemma}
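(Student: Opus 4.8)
The plan is to work in $r$'s coordinate frame and track the three points $s(t_s)$, $s(t_r)$, and $\DP{s,t_r}$ relative to the successive positions $r(t_s)$ and $r(t_r)$. By Observation~\ref{obs:sym} I may assume without loss of generality that, in the cycle $s$ began with its \Look at $t_s$, the robot $s$ moves horizontally; thus $s(t_r).y=s(t_s).y$ and $\DP{s,t_r}=\DP{s,t_s}$ (the latter by Observation~\ref{obs:dpInvariant}, since $t_s=\Last{s,t_r}$). If either $s$'s move or $r$'s net displacement $\Delta=r(t_r)-r(t_s)$ is null, the claim is immediate, so I assume both are nonzero; note that $\Delta$ has non-negative coordinates by Observation~\ref{obs:1}.

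First I would secure awareness of $\DP{s,t_r}$ from $r$'s \emph{old} position for free: since $t_s$ is a \Look time of $s$ and $\AW{s(t_s),r(t_s)}$ holds, Observation~\ref{obs:AWR} (with the roles of $r$ and $s$ exchanged) gives $\AW{\DP{s,t_s},r(t_s)}$, i.e.\ $\DP{s,t_r}\in R(r,t_s)$. Via Observation~\ref{obs:AWequiv}, the goal $\AW{r(t_r),\DP{s,t_r}}$ is then exactly the statement $\DP{s,t_r}-\Delta\in R(r,t_s)$; that is, sliding the point $\DP{s,t_r}\in R(r,t_s)$ down-left by $\Delta$ must keep it inside $R$. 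The auxiliary facts I may use are that $s(t_r)\in R(r,t_r)$ (second hypothesis) and that $\DP{s,t_r}$ lies at most $\rho/4$ to the \emph{right} of $s(t_r)$ on the same horizontal line (Observation~\ref{obs:3}, as $s$ moves horizontally).

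Next I would reduce to a single bad case by a case analysis on where $s(t_r)$ sits inside $R(r,t_r)$, in the spirit of the proof of Lemma~\ref{lem:aw1}. Since $s(t_r)$ and $\DP{s,t_r}$ share an ordinate in $r$'s frame and $\DP{s,t_r}$ is within $\rho/4$ to the right of $s(t_r)$, convexity settles $\DP{s,t_r}\in R(r,t_r)$ whenever the right boundary of $R$ at that ordinate is an edge of the circumscribed square $S$ (here $\DP{s,t_r}.x\le V-\rho$ because $\DP{s,t_r}\in R(r,t_s)$ and $\Delta\ge 0$), whenever the ordinate is non-negative (dominate the coordinates of $\DP{s,t_r}-\Delta$ by those of $\DP{s,t_r}$), and whenever $s(t_r)\in D_2(r,t_r)$ (then $\DP{s,t_r}$ is within $\rho/4+(V-\rho)<V-\rho/2$ of $r(t_r)$). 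Using the geometry of the annulus $D_1\setminus D_2$ from Proposition~\ref{prop1}, the only surviving case is the one where $s(t_r)$ lies in the bottom halt zone $H_2(r,t_r)$, i.e.\ $s(t_r)\in\Qii$, $s(t_r)\notin D_2(r,t_r)$, and its ordinate is below $p_2.y$, where the circular arc of $\partial D_1$ is the binding boundary; there the down-left slide by $\Delta$ can genuinely leave $D_1$.

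The crux — which I expect to be the main obstacle — is therefore to rule out $s(t_r)\in H_2(r,t_r)$. This cannot follow from the two \AW-hypotheses alone (one can choose $\DP{s,t_r}\in R(r,t_s)$ and $s(t_r)\in R(r,t_r)$ with $\DP{s,t_r}\notin R(r,t_r)$), so the argument must invoke the halt-zone rules together with the monotonicity of motion. Because $s$ moves only horizontally and $r$ only rightward and upward (Observation~\ref{obs:1}), the ordinate of $s$ in $r$'s frame is non-increasing on $[t_s,t_r]$, so $r$ can push $s$ below the bottom of $R\setminus H_2$ only by rising; I would show this is impossible. At the \Look of $r$ that governs its final upward motion, Rule~\ref{rule5} forces $r$ to pick a destination keeping the \emph{observed} $s$ inside $R\setminus H_2$, and the cap-and-halving of Rule~\ref{rule6} — powered by the $\rho/4$ travel bound of Observation~\ref{obs:3} — leaves a safety margin that absorbs $s$'s own asynchronous horizontal transit (and Rule~\ref{rule4} forbids rising at all once $s$ is seen in $H_2$). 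Hence $s(t_r)$ cannot descend past $p_2.y$ while remaining visible, contradicting $s(t_r)\in H_2(r,t_r)$. Making this last step airtight in the \ASYNC model — controlling what $r$ does between \Look{s} while acting on possibly stale snapshots of a \emph{moving} $s$ — is the delicate part, and is precisely where the $\rho/4$ bound and the halving are indispensable.
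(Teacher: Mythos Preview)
Your reduction to the single ``bad'' case $u=s(t_r)-r(t_r)\in H_2(r,t_r)$ is essentially sound: the infinity-norm part of $\AW{r(t_r),\DP{s,t_r}}$ follows from the two hypotheses and $\DP{s,t_r}.x\ge s(t_r).x$, and your three easy cases (right boundary is an edge of $S$; $u.y\ge 0$; $u\in D_2$) together with the lower-left sub-case do dispose of the $2$-norm part everywhere except $H_2$. The gap is in the crux paragraph.

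First, the case $s(t_r)\in H_2(r,t_r)$ \emph{cannot} be ruled out. Take $V=4$, $\rho=1$, $r(t_s)=(0,0)$, $s(t_s)=(1,-3)$; then $s(t_s)\in H_2(r,t_s)$ already, and $s$ may still move rightward since $r(t_s)-s(t_s)=(-1,3)\in R(s,t_s)\setminus H_1(s,t_s)$. If $r$ moves a little to the right (it sees $s\in H_2$ so $dp.y=0$, but $dp.x>0$), both hypotheses hold at $t_r$ with $s(t_r)\in H_2(r,t_r)$. So you must \emph{handle} this case, not exclude it. Your sentence ``$s(t_r)$ cannot descend past $p_2.y$'' presupposes $s$ starts above $p_2.y$ in $r$'s frame, which need not be true.

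Second, even if one tried to handle the case via $r$'s intermediate \Look{}s, your invocation of Rules~\ref{rule4}--\ref{rule5} at ``the \Look of $r$ that governs its final upward motion'' presupposes that $s$ is observed in $R(r)$ at that \Look. Nothing in the two endpoint hypotheses $\AW{r(t_s),s(t_s)}$ and $\AW{r(t_r),s(t_r)}$ guarantees this for intermediate times, and $r$ may perform many cycles in $[t_s,t_r]$ with unbounded net displacement $\Delta$. Establishing $s\in R(r)$ at those intermediate \Look{}s is precisely what you have not done --- and doing so is tantamount to the inductive argument you are trying to avoid.

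The paper's proof proceeds differently. It does not attempt to control $r$'s full displacement in one shot; instead it sets $t'=\First{r,t_s}$, so that on $[t_s,t']$ the robot $r$ completes at most one (partial) move of length $\le\rho/4$, and shows $s(t'),\DP{s,t'}\in R(r,t')$ by a case analysis. The key ingredient there is not $r$'s halt-zone rule but \emph{$s$'s}: since $s$ performed a \Look at $t_s$ and saw $r(t_s)\in R(s,t_s)$, Rule~\ref{rule5} for $s$ forces $\DP{s,t_s}\notin -H_2(r,t_s)$ (in the paper's vertical convention; in yours, $-H_1$). This constraint on $\DP{s,t_s}$, combined with the $\rho/4$ bound on $r$'s single move, is what closes the bad case. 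After the base step at $t'$, the paper iterates Lemma~\ref{lem:aw1} along $r$'s successive \Look times up to $t_r$, which is exactly what supplies the ``$s\in R(r)$ at intermediate \Look{}s'' information your direct approach is missing.
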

\begin{proof}
From $t_s=\Last{s,t_r}$ it follows that $t_s\leq t_r$. If $t_s=t_r$, then $s\in\mathbb L(t_r)$ and, due to Observation~\ref{obs:AWR}, $\AW{\DP{s,t_r},\, r(t_r)}$, which is our claim. So let us assume that $t_s<t_r$. If $\DP{s,t_s}=s(t_s)$, there is nothing to prove, because in this case $\DP{s,t_r} = \DP{s,t_s} = s(t_s) = s(t_r)$. So we may assume that $s$ moves strictly vertically (cf.\ Observation~\ref{obs:sym}), and therefore $\DP{s,t_s}.x=s(t_s).x$ and $s(t_s).y< \DP{s,t_s}.y\leq s(t_s).y+\rho/4$. Let $\Delta=\DP{s,t_s}-s(t_s)$. Also observe that, by definition of $t_s$, $\DP{s,t_r}=\DP{s,t_s}$.

We reason by considering the ``point of view'' of robot $r$. Let $\Delta'=r(t_r)-r(t_s)$. Hence $\DP{s,t_r}-r(t_r)=\DP{s,t_s}-\Delta'-r(t_s)$. In other terms, as a consequence of $r$ moving upward and rightward (by $\Delta'$) between $t_s$ and $t_r$, $\DP{s,t}$ moves downward and leftward in the coordinate system of $r$, as $t$ varies from $t_s$ to $t_r$.

Recall that $\AW{r(t_r),\, s(t_r)}$ by hypothesis, and hence $s(t_r)\in R(r,t_r)$. If $s(t_r).y\leq r(t_r).y$, then, by Proposition~\ref{prop1}, $\DP{s,t_r}\in R(r,t_r)$, as desired. Therefore, assume that $s(t_r).y> r(t_r).y$. This also implies that $\DP{s,t}.y> r(t).y$ for all $t\in[t_s,t_r]$.
Note that $|s(t).x-r(t).x|\leq V-\rho$, for every $t\in[t_s,t_r]$. Indeed, the inequality holds at times $t_s$ and $t_r$ by the hypotheses of the lemma, and moreover $s(t).x$ is independent of $t\in[t_s,t_r]$, while $r(t).x$ may only increase.

Let $t'=\First{r,t_s}$. We claim that both $s(t')$ and $\DP{s,t'}$ belong to $R(r,t')$. Assume first that $r$ moves upward (or stays still) between $t_s$ and $t'$. Then, by Observation~\ref{obs:AWR} and the convexity of $R$, the segment with endpoints $s(t_s)$ and $\DP{s,t_s}$ lies in $R(r,t_s)$. If such a segment moves downward in the coordinate system of $r$ (as a consequence of $r$ moving upward), and, at time $t'$, $s$ lies strictly below $R(r)$, this implies that $\DP{s,t'}.y$ cannot be greater than $r(t').y$, due to Proposition~\ref{prop1} (recall that $\DP{s,t'}.y-s(t').y\leq \rho/4$). This contradicts the assumption on $\DP{s,t'}.y$ made in the previous paragraph.

So, let $r$ move rightward, and let $r(t')=r(t_s)+\Delta''$, with $0<\Delta''.x\leq \rho/4$. Hence, if $s(t_s).x\geq r(t_s).x$, our claim is once again easily proven. Indeed, by Observation~\ref{obs:AWR}, $\DP{s,t_s}$ lies in $R(r,t_s)$, as well as $s(t_s)$. Then, by Proposition~\ref{prop1}, these two points cannot move outside of $R(r)$ as $r$ moves rightward by at most $\rho/4$, provided that $s(t_s).x=\DP{s,t_s}.x\geq r(t_s).x$. So, let us assume that $s(t_s).x< r(t_s).x$.

Since by hypothesis $s$ moves strictly upward based on a \Look performed at time $t_s$, it means that $r(t_s)\notin H_2(s,t_s)$. Equivalently, $s(t_s)$ does not belong to the region symmetric to $H_2(r,t_s)$ with respect to $r(t_s)$, which we denote by $-H_2(r,t_s)$ (see Figure~\ref{fig:MAia}). As a consequence of the algorithm (in particular, by Rule~\ref{rule5} of Section~\ref{sec:algo}), $s$ does not compute a destination point that would make $r$ enter the region $H_2$. Equivalently, in $r$'s coordinate system, $\DP{s,t_s}\notin -H_2(r,t_s)$.

\begin{figure}[ht]
\centering
\subfigure[]{\label{fig:MAia}\includegraphics[scale=1]{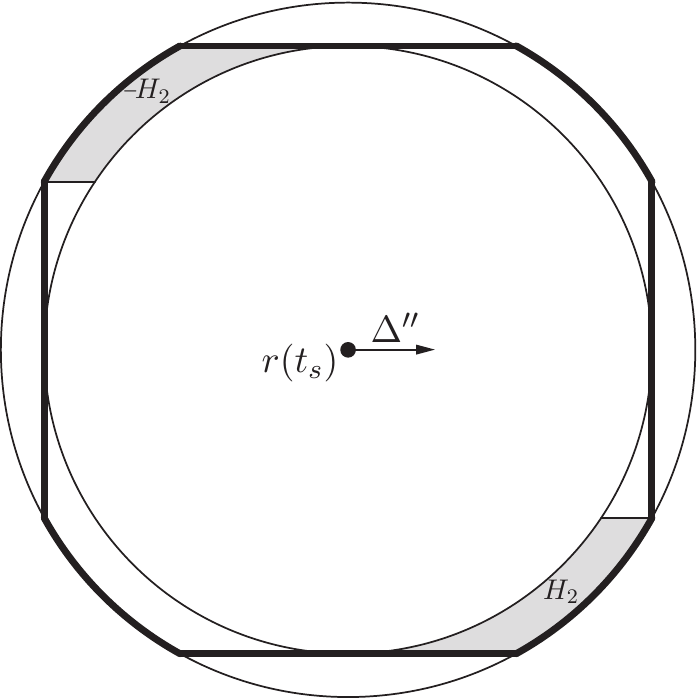}}\qquad
\subfigure[]{\label{fig:MAib}\includegraphics[scale=1.25]{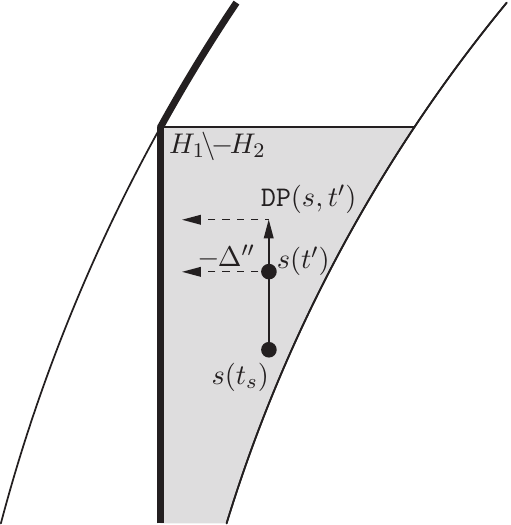}}
\caption{Proof of Lemma~\ref{lem:aw2}. In (a), the gray area in the upper-left corner is $-H_2(r,t_s)$. In (b), a detail of the set difference $H_1\setminus -H_2$ is shown.}
\label{fig:MAi}
\end{figure}

In particular, as illustrated in Figure~~\ref{fig:MAib}, if $s(t_s)\in H_1(r,t_s)\setminus -H_2(r,t_s)$, then also $\DP{s,t_s}\in H_1(r,t_s)\setminus -H_2(r,t_s)$. Hence both $s(t')$ and $\DP{s,t'}$ belong to $R(r,t')$ (recall that $|s(t').x-r(t').x|\leq V-\rho$).

Suppose now that $s(t_s)\in D_2(r,t_s)$. Note that, as a consequence of the algorithm (again, by Rule~\ref{rule5} of Section~\ref{sec:algo}), $\DP{s,t_s}.y\leq r(t_s).y+V-\rho$. Additionally, $s$ has to move by more than $\rho/2$ in the coordinate system of $r$ in order to cross the boundary of $D_1(r)$. But $\norm{\Delta+\Delta''}_2\leq \rho/4+\rho/4=\rho/2$. As a consequence,  both $s(t')$ and $\DP{s,t'}$ still belong to $D_1(r,t')$, and therefore also to $R(r,t')$ (note that we already proved that $|s(t').x-r(t').x|\leq V-\rho$).

The only case left is when $s(t_s)$ lies in the lower-left area bounded by $R(r,t_s)$ and $D_2(r,t_s)$. By Proposition~\ref{prop1} and because $\DP{s,t_s}.y-s(t_s).y\leq \rho/4$, $\DP{s,t'}$ certainly lies in $R(r,t')$. However, we also know that $\DP{s,t'}.y>r(t').y$, and that $\DP{s,t'}.y-s(t').y\leq \rho/4$. Hence, again by Proposition~\ref{prop1}, $s(t')$ must lie in $R(r,t')$ as well.

Now our claim is proven. If $t'=t_r$, we are done. Otherwise, we apply Lemma~\ref{lem:aw1} by setting $t:=t'$. As a result, $\AW{\DP{r,t'},\,s(t')}$ and $\AW{\DP{r,t'},\,\DP{s,t'}}$. Let $t''=\First{r,t'}$. By the convexity of $R$, it follows that both $s(t'')$ and $\DP{s,t''}$ belong to $R(r,t'')$ (recall that $\DP{s,t}$ does not depend on $t\in[t_s,t_r]$). If $t''=t_r$, we are done. Otherwise, we keep applying Lemma~\ref{lem:aw1} (with $t:=t''$, etc.) and repeating the previous reasoning, until we prove that $\DP{s,t_r}\in R(r,t_r)$, which concludes the proof.
\end{proof}

Now we are ready to give the full definition of \emph{mutual awareness} and the related graph.

\begin{definition}[Mutual Awareness]\label{def:ma}
Two distinct robots $r$ and $s$ are {\em mutually aware} at time $t$ if both conditions hold:
\begin{enumerate}
\item $\AW{r(t_r),\, s(t_r)}$, with $t_r=\Last{r,t}$, and
\item $\AW{r(t_s),\, s(t_s)}$, with $t_s=\Last{s,t}$.
\end{enumerate}
\end{definition}

\begin{definition}[Mutual Awareness Graph]
The {\em mutual awareness graph} at time $t\geq 0$ is the graph $\widetilde G(t) = (\setRobots, E(t))$ such that, for any two distinct robots $r$ and $s$, $\{r,s\} \in E(t)$ if and only if $r$ and $s$ are mutually aware at time $t$.
\end{definition}

We recall that $D=V-\sigma$, with $\sigma>0$ arbitrary small. By definition of ${\it Last}$ and of mutual awareness, we have the following.

\begin{observation}
\label{obs:mutual-basis} All the pairs of robots that are
at (Euclidean) distance not greater than $D$ from each other at time $t=0$ are initially mutually aware. Hence $J\subseteq \widetilde G(0)$, and therefore $\widetilde G(0)$ is connected.\qed
\end{observation}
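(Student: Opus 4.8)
The plan is to exploit the fact that at time $t=0$ no robot has yet performed a \Look, which collapses the two clauses of mutual awareness into a single, easily-checked condition. First I would note that, by the model assumptions, all robots are initially idle, so $\Last{r,0}=0$ for every robot $r$; in particular $t_r=t_s=0$ in both clauses of Definition~\ref{def:ma}, and the definition of mutual awareness at time $0$ reduces to the single relation $\AW{r(0),\,s(0)}$.

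The core of the argument is then to show that $\dist(r(0),s(0))\leq D$ implies $\AW{r(0),\,s(0)}$, i.e.\ the two inequalities $\norm{r(0)-s(0)}_2\leq V-\rho/2$ and $\norm{r(0)-s(0)}_\infty\leq V-\rho$. Here I would use that $D=V-\sigma$ and $\rho=\min\{V/4,\,V-D\}=\min\{V/4,\,\sigma\}\leq\sigma$. The Euclidean clause follows from $\norm{r(0)-s(0)}_2=\dist(r(0),s(0))\leq V-\sigma\leq V-\rho/2$ (using $\rho/2\leq\rho\leq\sigma$), and the infinity-norm clause from $\norm{r(0)-s(0)}_\infty\leq\norm{r(0)-s(0)}_2\leq V-\sigma\leq V-\rho$ (using $\rho\leq\sigma$). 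Hence $\AW{r(0),\,s(0)}$ holds and the two robots are mutually aware at time $0$.

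To conclude, I would compare edge sets: every edge $\{r,s\}$ of $J$ satisfies $\dist(r(0),s(0))\leq D$, hence by the above $r$ and $s$ are mutually aware at time $0$ and $\{r,s\}$ is an edge of $\widetilde G(0)$. This yields $J\subseteq\widetilde G(0)$ on the common vertex set $\setRobots$, and since Assumption~\ref{a:assumption1} guarantees $J$ connected, the supergraph $\widetilde G(0)$ is connected as well.

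There is no genuinely hard step here. The only content is the arithmetic reconciling $D$, $V$, $\rho$ and $\sigma$, and the decisive observation is that at $t=0$ both \Last-based clauses of mutual awareness are evaluated at time $0$, which trivializes the definition. The single point requiring care is invoking the hypothesis that no robot has moved (performed a \Look) before $t=0$, which is exactly what gives $\Last{r,0}=0$; this is where the ``initially idle / no robot moving at $t=0$'' assumption of the model is used.
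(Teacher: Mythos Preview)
Your proof is correct and follows exactly the approach the paper intends: the paper states the observation with a \qed\ and only the remark ``by definition of \textit{Last} and of mutual awareness,'' and you have simply unpacked those definitions together with the arithmetic $\rho=\min\{V/4,\sigma\}\leq\sigma$ that makes $D=V-\sigma\leq V-\rho\leq V-\rho/2$. Nothing is missing or different.
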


In the following lemma, we will prove that any two robots that are mutually aware at some point keep being so during the entire execution.

\begin{lemma}
If robots $r$ and $s$ are mutually aware at time $t$, they are mutually aware at any time $t'\geq t$.
\label{lem:aware}
\end{lemma}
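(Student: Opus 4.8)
The plan is to show that mutual awareness is preserved across every \Look event, since between two consecutive \Look events the relation cannot change: by Definition~\ref{def:ma} it depends only on the positions of $r$ and $s$ at the times $\Last{r,\cdot}$ and $\Last{s,\cdot}$, which stay fixed in such an interval. By the symmetry of the roles of $r$ and $s$, I may assume that the event under scrutiny, at some time $\tau>t$, is a \Look performed by $r$, so that $\Last{r,\tau}=\tau$ while $\Last{s,\tau}$ is unchanged. Then the second condition of Definition~\ref{def:ma} carries over automatically, and the only thing left to establish is the first condition, $\AW{r(\tau),\,s(\tau)}$.

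To drive the induction I would carry a stronger \emph{four-corner invariant}: at each \Look event the four pairs $(r,s)$, $(r,\DP{s})$, $(\DP{r},s)$, $(\DP{r},\DP{s})$, formed by the positions and the destination points of the two robots, all satisfy the $\AW$ relation. The geometric engine is the observation that the set $\{(p,q)\in\mathbb R^2\times\mathbb R^2 \mid \norm{p-q}_2\leq V-\rho/2 \ \wedge\ \norm{p-q}_\infty\leq V-\rho\}$ of $\AW$-pairs is convex, being the intersection of the preimages of two convex balls under the linear map $(p,q)\mapsto p-q$. On the time interval between two consecutive events both destination points are fixed, and each robot's current position is a convex combination of its position and its destination point as they stood at the start of the interval; hence the current pair $(r(t'),s(t'))$ is itself a convex combination of the four corners and therefore remains $\AW$. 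In particular this delivers $\AW{r(\tau),\,s(\tau)}$ at the next event.

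It then remains to restore the invariant at the event itself, now with $r$'s \emph{new} destination point. Having $\AW{r(\tau),\,s(\tau)}$ together with the inherited second condition $\AW{r(\Last{s,\tau}),\,s(\Last{s,\tau})}$, I would invoke Lemma~\ref{lem:aw2} to upgrade these to $\AW{r(\tau),\,\DP{s,\tau}}$, and then feed $\AW{r(\tau),\,s(\tau)}$ and $\AW{r(\tau),\,\DP{s,\tau}}$ into Lemma~\ref{lem:aw1} to obtain the two destination corners $\AW{\DP{r,\tau},\,s(\tau)}$ and $\AW{\DP{r,\tau},\,\DP{s,\tau}}$. This closes the inductive step and at the same time re-establishes the first condition of mutual awareness at $\tau$. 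For the base case I would exploit that mutual awareness at $t$ yields the $\AW$ relation at \emph{both} last looks $t_r=\Last{r,t}$ and $t_s=\Last{s,t}$; taking, without loss of generality, $t_r\leq t_s$ gives $\Last{r,t_s}=t_r$, so Lemma~\ref{lem:aw2} and then Lemma~\ref{lem:aw1} (with the roles of $r$ and $s$ swapped) apply at $s$'s look $t_s$ to install the four corners there, after which convexity propagates them forward to time $t$.

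The hard part will be the bookkeeping forced by the asynchrony: between two successive looks of $r$ the robot $s$ may look and move several times, so a single convexity argument over one move of $r$ is not available. This is precisely why the induction must be indexed by the \emph{merged} sequence of look events of $r$ and $s$, keeping both destination points fixed on each sub-interval so that the convex-combination argument applies verbatim. Fortunately, the innermost difficulty, namely $s$ drifting while $r$ relooks, is already absorbed by Lemma~\ref{lem:aw2}, whose statement is tailored to the situation $t_s=\Last{s,t_r}$; the remaining effort is to interleave the two lemmas consistently along the event sequence and to dispatch the degenerate case of simultaneous looks, which collapses to the trivial branch $t_s=t_r$ already handled inside Lemma~\ref{lem:aw2}.
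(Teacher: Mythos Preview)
Your proposal is correct and follows essentially the same approach as the paper's own proof: induction along the merged sequence of \Look events of $r$ and $s$, with a convexity argument to carry $\AW$ across each interval, and Lemmas~\ref{lem:aw2} and~\ref{lem:aw1} applied (in that order) at each event to re-establish the invariant. The only cosmetic differences are that the paper tracks three $\AW$ conditions and re-derives the fourth corner via Lemma~\ref{lem:aw1} at the start of each step, whereas you carry all four; and the paper phrases the convexity argument via the convexity of $R$ in $\mathbb R^2$ rather than of the $\AW$-set in $\mathbb R^2\times\mathbb R^2$, which is equivalent since $\AW{p,q}$ depends only on $p-q$.
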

\begin{proof}
Let $(t_i)_{i\geq 0}$ be the strictly increasing sequence of time instants at which either $r$ or $s$ executes a \Look; if both $r$ and $s$ execute a \Look simultaneously, such a time instant appears only once in the sequence. Without loss of generality, we may assume that $r$ and $s$ first become mutually aware at time $t_m$, when $r$ enters a \Look phase.

We will prove by induction that, for all $i\geq m$, the following conditions hold:
\begin{enumerate}
\item $\AW{r(t_i),\, s(t_i)}$,
\item $\AW{\DP{r,t_i},\, s(t_i)}$,
\item $\AW{r(t_i),\, \DP{s,t_i}}$,
\end{enumerate}
which will clearly imply our claim (Condition~1 actually suffices).

Let $i=m$, and observe that Condition~1 holds by definition of mutual awareness. Moreover, by Lemma~\ref{lem:aw2} with $t_r:=t_m$, Condition~3 holds, too. Finally, Condition~2 is implied by Conditions~1 and~3 and by Lemma~\ref{lem:aw1} with $t:=t_m$.

Suppose now that $i>m$, and let the three conditions hold at every time $t_j$ with $m\leq j\leq i-1$. Without loss of generality, we may assume that $r\in\mathbb L(t_{i-1})$ (if $s\in\mathbb L(t_{i-1})$, we just exchange $r$ and $s$ in our proof). By Conditions~1 and~3 on $t_{i-1}$, we have
$$\AW{r(t_{i-1}),\, s(t_{i-1})}\mbox{ and}$$
$$\AW{r(t_{i-1}),\, \DP{s,t_{i-1}}}.$$
By Lemma~\ref{lem:aw1} with $t:=t_{i-1}$, we have also $\AW{\DP{r,t_{i-1}},\, s(t_{i-1})}$ and $\AW{\DP{r,t_{i-1}},\, \DP{s,t_{i-1}}}$. These are equivalent, respectively, to
$$\AW{r(t_{i-1}),\, s(t_{i-1})-\DP{r,t_{i-1}}+r(t_{i-1})}\mbox{ and}$$
$$\AW{r(t_{i-1}),\, \DP{s,t_{i-1}}-\DP{r,t_{i-1}}+r(t_{i-1})}.$$
Collectively, $s(t_{i-1})$, $\DP{s,t_{i-1}}$, $s(t_{i-1})-\DP{r,t_{i-1}}+r(t_{i-1})$ and $\DP{s,t_{i-1}}-\DP{r,t_{i-1}}+r(t_{i-1})$ are four points whose convex hull $C$ is either a rectangle or a segment (depending if $r$ and $s$ move orthogonally or parallel to each other between $t_{i-1}$ and $t_i$). Because the vertices of $C$ are contained in $R(r,t_{i-1})$ (cf.~the definition of $R$ in the algorithm), and because $R$ is convex, $C$ is entirely contained in $R(r,t_{i-1})$ (refer to Figure~\ref{fig:algob}).

Moreover, $r(t_i)$ (resp.\ $s(t_i)$) lies on the segment with endpoints in $r(t_{i-1})$ and $\DP{r,t_{i-1}}$ (resp.\ $s(t_{i-1})$ and $\DP{s,t_{i-1}}$). Let $r(t_i)=r(t_{i-1})+\Delta_r$ and $s(t_i)=s(t_{i-1})+\Delta_s$. So, the point $s(t_{i-1})+\Delta_s-\Delta_r$ belongs to $C$, and therefore to $R(r,t_{i-1})$. In other terms,
$$\AW{r(t_{i-1}),\, s(t_{i-1})+\Delta_s-\Delta_r},$$
which is equivalent to $\AW{r(t_{i-1})+\Delta_r,\, s(t_{i-1})+\Delta_s}$, and to $\AW{r(t_{i}),\, s(t_{i})}$. Hence Condition~1 holds at $t_i$.

Once again, without loss of generality, we may assume that $r\in\mathbb L(t_i)$. Then, Condition~3 at $t_i$ follows from Condition~1 and Lemma~\ref{lem:aw2} with $t_r:=t_i$. Condition~2, on the other hand, follows from Conditions~1 and~3, and from Lemma~\ref{lem:aw1} with $t:=t_i$.
\end{proof}

\begin{corollary}
\label{cor:connectaware}
$\widetilde G(t)$ is connected at any time $t\geq 0$.
\end{corollary}
\begin{proof}
$\widetilde G(0)$ is connected by Observation~\ref{obs:mutual-basis}. By Lemma~\ref{lem:aware}, $\widetilde G(0)$ is a subgraph of $\widetilde G(t)$, and therefore $\widetilde G(t)$ is connected.
\end{proof}

\begin{corollary}
\label{cor:connectvisible}
$\widetilde G(t)\subseteq G(t)$, and therefore $G(t)$ is connected at any time $t\geq 0$.
\end{corollary}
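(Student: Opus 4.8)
The plan is to prove the inclusion $\widetilde G(t)\subseteq G(t)$; the connectedness of $G(t)$ then follows immediately, since $\widetilde G(t)$ and $G(t)$ share the same vertex set \setRobots and $\widetilde G(t)$ is connected by Corollary~\ref{cor:connectaware}. For the inclusion, I must show that whenever $r$ and $s$ are mutually aware at time $t$, their current Euclidean distance satisfies $\norm{r(t)-s(t)}_2\leq V-\rho/2$, i.e.\ $\{r,s\}\in E(t)$ in $G(t)$. The difficulty is that mutual awareness constrains the positions of $r$ and $s$ at their respective \emph{last \Look times}, whereas the edge of $G(t)$ concerns their positions at the generally different, and non-\Look, time $t$; I therefore need to propagate the awareness bound through the ongoing, asynchronous moves of both robots.

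First I would reduce to the last event time. Let $t_r=\Last{r,t}$, $t_s=\Last{s,t}$, and set $t_i=\max\{t_r,t_s\}$, the most recent \Look performed by either robot no later than $t$. Since neither robot performs a \Look in $(t_i,t]$, the relation ``$r$ and $s$ are mutually aware'' evaluated at $t$ and at $t_i$ is governed by the very same pair of conditions, so mutual awareness at $t$ is equivalent to mutual awareness at $t_i$. Assuming without loss of generality $t_i=t_r\geq t_s$, I have $\Last{s,t_i}=t_s$, and the two defining conditions give $\AW{r(t_i),\,s(t_i)}$ and $\AW{r(t_s),\,s(t_s)}$. Exactly as in the proof of Lemma~\ref{lem:aware}, I can then apply Lemma~\ref{lem:aw2} (with $t_r:=t_i$) to obtain $\AW{r(t_i),\,\DP{s,t_i}}$, and subsequently Lemma~\ref{lem:aw1} (with $t:=t_i$) to obtain both $\AW{\DP{r,t_i},\,s(t_i)}$ and $\AW{\DP{r,t_i},\,\DP{s,t_i}}$. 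By Observation~\ref{obs:AWequiv}, these four facts say precisely that the four relative vectors $s(t_i)-r(t_i)$, $s(t_i)-\DP{r,t_i}$, $\DP{s,t_i}-r(t_i)$, and $\DP{s,t_i}-\DP{r,t_i}$ all lie in the convex region $R$.

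The geometric core is then to interpolate to time $t$. By Observation~\ref{obs:dpInvariant}, the destination points $\DP{r,t_i}$ and $\DP{s,t_i}$ stay fixed throughout $(t_i,t]$, so, since each robot moves monotonically along the straight segment toward its destination, there exist $\lambda_r,\lambda_s\in[0,1]$ with $r(t)=(1-\lambda_r)\,r(t_i)+\lambda_r\,\DP{r,t_i}$ and $s(t)=(1-\lambda_s)\,s(t_i)+\lambda_s\,\DP{s,t_i}$. A direct expansion shows that $s(t)-r(t)$ equals the double convex combination $(1-\lambda_s)(1-\lambda_r)[s(t_i)-r(t_i)] + (1-\lambda_s)\lambda_r[s(t_i)-\DP{r,t_i}] + \lambda_s(1-\lambda_r)[\DP{s,t_i}-r(t_i)] + \lambda_s\lambda_r[\DP{s,t_i}-\DP{r,t_i}]$, whose four coefficients are nonnegative and sum to $1$. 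Since each bracketed vector lies in $R$ and $R=D_1\cap S$ is convex, $s(t)-r(t)\in R\subseteq D_1$, and hence $\norm{r(t)-s(t)}_2\leq V-\rho/2$. This is exactly the condition for $\{r,s\}\in E(t)$ in $G(t)$, proving $\widetilde G(t)\subseteq G(t)$.

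I expect the main obstacle to be the passage to the arbitrary, non-\Look, time $t$: all the awareness bounds are anchored at \Look times, and it is the bilinear-interpolation step, resting on the convexity of $R$ and on the four corner conditions established above, that lets me carry the bound to the genuinely intermediate positions $r(t)$ and $s(t)$. Correctly setting up those four corner conditions from the definition of mutual awareness, in particular matching the \Last times and invoking Lemmas~\ref{lem:aw2} and~\ref{lem:aw1} in the right order, is the most delicate bookkeeping.
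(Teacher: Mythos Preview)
Your proof is correct, and it takes a genuinely different route from the paper's own argument. The paper proves $\widetilde G(t)\subseteq G(t)$ by a ``modified execution'' trick: it observes that the proof of Lemma~\ref{lem:aware} nowhere uses the minimum-move constant $\delta$, so one may alter the schedule to interrupt both $r$ and $s$ exactly at time $t$; in that altered run $r$ and $s$ are still mutually aware at $t'=\First{r,t}$, with $r(t')=r(t)$ and $s(t')=s(t)$, and the definition of mutual awareness then gives $\norm{r(t)-s(t)}_2\leq V-\rho/2$ directly. Your approach instead stays in the original execution and pushes the bound from the last \Look time $t_i$ to the arbitrary time $t$ by the bilinear convex-combination argument, using the four corner conditions obtained from Lemmas~\ref{lem:aw2} and~\ref{lem:aw1}. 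This is essentially the same convexity step that the paper uses inside the induction of Lemma~\ref{lem:aware} (where the convex hull $C$ of the four ``corner'' points is shown to lie in $R$), but you apply it to the interval $[t_i,t]$ rather than between two consecutive \Look events. Your route is more explicit and self-contained, avoiding the somewhat informal appeal to an alternative scheduler; the paper's route is shorter but relies on the meta-observation that Lemma~\ref{lem:aware} is robust to dropping~$\delta$.
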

\begin{proof}
Suppose that robots $r$ and $s$ are mutually aware at time $t$. Then, by Lemma~\ref{lem:aware}, they are mutually aware at any time after $t$, regardless of the scheduler's choices. Moreover, observe that the proof of Lemma~\ref{lem:aware} goes through even if the scheduler can stop the robots before they have moved by at least $\delta$ (recall that the fairness assumption of our robot model normally forbids the scheduler to interrupt a robot's \Move phase before it has moved by at least $\delta$).

Let us therefore modify the execution of $r$ and $s$, and let the scheduler interrupt their \Move phase precisely at time $t$, regardless of how much they have actually moved during that phase. By the previous observations, $r$ and $s$ are mutually aware at time $t'=\First{r,t}$, and additionally $r(t)=r(t')$ and $s(t)=s(t')$. Hence, by definition of mutual awareness, $r$ and $s$ are at (Euclidean) distance not grater than $V-\rho/2$ at time $t'$, and therefore also at time $t$.

This implies that $\widetilde G(t)\subseteq G(t)$, and hence that $G(t)$ is connected, by Corollary~\ref{cor:connectaware}.
\end{proof}

\subsection{Collision Avoidance\label{sec:CA}}

In this section, we will prove that no collision occurs during the execution of the algorithm. 

\begin{lemma}
\label{lem:collision}
No collision ever occurs between any pair of robots during the execution of the algorithm.
\end{lemma}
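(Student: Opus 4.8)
The plan is to argue by contradiction, isolating the \emph{first} collision. Since the robots move along continuous trajectories, and by hypothesis no two robots coincide and no robot is moving at time $t=0$, the set of times at which some pair of robots occupies the same point is closed and bounded below by a positive number; let $t^*>0$ be its minimum, and let $r$ and $s$ be two robots with $r(t^*)=s(t^*)=p$. Thus all robots are at pairwise distinct positions at every time $t<t^*$. Set $t_0=\Last{r,t^*}$ and $t_1=\Last{s,t^*}$, and assume without loss of generality that $t_0\geq t_1$ (relabelling $r$ and $s$ if necessary). First I would localize both robots near $p$ at time $t_0$: by Observation~\ref{obs:3}, $\dist(r(t_0),p)\leq \rho/4$, since $p$ lies on the segment travelled by $r$ during its last cycle; likewise $\dist(s(t_1),p)\leq\rho/4$, and because $t_1\leq t_0\leq t^*$ the point $s(t_0)$ lies on the monotone segment from $s(t_1)$ to $p$ (Observation~\ref{obs:1}), so $\dist(s(t_0),p)\leq\rho/4$ as well. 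Hence $\dist(r(t_0),s(t_0))\leq\rho/2<V$, so $r$ sees $s$ during its \Look at $t_0$; moreover the same bound on both norms yields $\AW{r(t_0),s(t_0)}$, i.e.\ $s(t_0)\in R(r,t_0)$ by Observation~\ref{obs:AWequiv}.

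The core of the argument is a \emph{move-space obstruction}. Since $s$ moves only rightward and upward from $s(t_0)$ to $p$ (Observation~\ref{obs:1}), we have $p\in\MS{s,t_0}$. By Rule~\ref{rule2} the algorithm never lets $r$ enter the move space of a visible robot unless it already lies in it, and the final halving of the displacement (Rule~\ref{rule6}) provides a \emph{strict} margin: if $r(t_0)\notin\MS{s,t_0}$, then the capped, halved destination $\DP{r,t_0}$ is strictly outside $\MS{s,t_0}$, and so is the entire segment travelled by $r$ during this cycle, contradicting $r(t^*)=p\in\MS{s,t_0}$. Therefore $r(t_0)\in\MS{s,t_0}$, that is, $s(t_0).x\leq r(t_0).x$ and $s(t_0).y\leq r(t_0).y$. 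So at time $t_0$ the robot $s$ lies weakly to the lower-left of $r$, yet both must still reach $p$ by time $t^*$.

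It remains to rule out that $s$ reaches $p$ ``from behind''. In $[t_0,t^*]$ each robot performs a single axis-aligned move ending at $p$ (Observation~\ref{obs:1}), so only a short case analysis on the two move directions survives, which Observation~\ref{obs:sym} halves. The delicate case is the asynchronous one in which $s$ advances toward $p$ along a segment on which $r$ currently sits. Here I would examine $s$'s own decision at $t_1$ and apply the symmetric move-space obstruction to $s$ relative to $r$: since $r$ moves monotonically up-right we have $p\in\MS{r,t_1}$, so \emph{if $s$ sees $r$ at $t_1$}, then Rule~\ref{rule2} together with the halving of Rule~\ref{rule6} forces $s$ to stop strictly short of $p$ unless $r(t_1)\preceq s(t_1)$ componentwise. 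Combining this with the conclusion $s(t_0)\preceq r(t_0)$ of the previous paragraph shows that $r$ would have to overtake $s$ in both coordinates between $t_1$ and $t_0$. But by Corollary~\ref{cor:connectvisible} and the preservation of mutual awareness (Lemma~\ref{lem:aware}), once $r$ and $s$ are mutually aware they remain within distance $V-\rho/2$ and hence keep seeing each other at every subsequent \Look; and a robot that continually sees another cannot cross from outside to inside its move space without a prior coincidence (Rule~\ref{rule2}), contradicting the minimality of $t^*$. In the complementary situation, where $s$ does not see $r$ at $t_1$, the bound of Observation~\ref{obs:3} confines $s$'s stale move to length at most $\rho/4$, which I would use to show that $s$ cannot bridge the coordinate in which it is separated from $r$ in time to coincide exactly at $t^*$. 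Every branch yields a contradiction, so no collision occurs.

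I expect the main obstacle to be precisely this last case. Because the robots are asynchronous and have limited visibility, the ``catching'' robot may have computed its destination from a stale snapshot taken when the other robot was elsewhere, or even invisible (the very phenomenon depicted in Figure~\ref{fig:algoa}). Making the overtaking argument rigorous requires carefully interleaving three ingredients: the strict margin created by the halving in Rule~\ref{rule6}, the monotonicity of all motion (Observation~\ref{obs:1}), and the continuity of mutual visibility guaranteed by Corollary~\ref{cor:connectvisible} and the awareness machinery of Section~\ref{sec:MA}; the bookkeeping of \emph{which} snapshot each robot acted upon, and at which of the many intermediate \Look times visibility can be asserted, is where the proof becomes most intricate.
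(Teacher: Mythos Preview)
Your move-space obstruction in the second paragraph is sound and yields $s(t_0)\preceq r(t_0)$, but the third paragraph has a genuine gap in the appeal to mutual awareness. You established only $\AW{r(t_0),s(t_0)}$; this is \emph{one half} of mutual awareness at time $t_0$ (Definition~\ref{def:ma} also requires $\AW{r(t_1),s(t_1)}$, and you have no control over $r(t_1)$, which may be far from $p$ since $r$ can execute many cycles in $[t_1,t_0]$). Even if mutual awareness held at $t_0$, Lemma~\ref{lem:aware} propagates it \emph{forward}, not backward to $[t_1,t_0]$; and Corollary~\ref{cor:connectvisible} only gives connectedness of $G(t)$, not that the particular edge $\{r,s\}$ is present. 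So you cannot conclude that $r$ sees $s$ at every \Look in $[t_1,t_0]$, and your ``continuous visibility implies no overtaking'' step is unjustified. The complementary branch, where $s$ does not see $r$ at $t_1$, is left essentially unargued: the $\rho/4$ bound on $s$'s move alone does not preclude $s$ from reaching $p$, since $\dist(s(t_1),p)\leq\rho/4$ by your own estimate.

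The paper's proof avoids the awareness machinery entirely. It takes $t'$ (resp.\ $t''$) to be the last \Look of $r$ (resp.\ $s$) at which the robot is \emph{not yet} at $p$, with $t''\leq t'$, and then does a direct case analysis on the two move directions using only Rule~\ref{rule2} (via the sets $\Qi,\Qii$) and the $\rho/4$ step bound. The crux is the ``both move rightward'' case: there one shows $s(t'')$ lies strictly left of $r(t'')$ on the line $y=p.y$ and $r(t'').y<p.y$, so $r$ must enter $\Qii(s)$ between $t''$ and $t'$; since each move covers at most $V/16$, $r$ must perform a \Look at some time $t'''\in[t'',t')$ just before its last upward step into $\Qii(s)$, and at that moment $s(t''')\in\Qi(r,t''')$ with $\dist(r(t'''),s(t'''))\leq V/8$, which directly contradicts Rule~\ref{rule2}. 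The visibility needed here is \emph{local} (distance $\leq V/8$) and is obtained from the step bound, not from any global awareness invariant. If you want to salvage your approach, this is the missing idea: replace the appeal to Lemma~\ref{lem:aware} by a local argument that traces $r$'s trajectory in $[t_1,t_0]$ and finds a specific \Look at which $s$ is close enough to force a Rule~\ref{rule2} violation.
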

\begin{proof}
Let us assume by contradiction that two distinct robots $r$ and $s$ collide during their execution. Because $r(t)$ and $s(t)$ are continuous functions, there exists a minimum time instant $t>0$ at which $r(t)=s(t)=p$. At least one robot, say $r$, must make a strictly positive movement toward $p$, at some point. Let $t'<t$ be the last time at which $r$ performs a \Look phase such that $r(t')\neq p$. Recall that, by Observation~\ref{obs:1}, $r$ and $s$ move either upward or rightward at each move. Without loss of generality (cf.~Observation~\ref{obs:sym}), let us assume that $r$ moves strictly rightward between $t'$ and $t$. Then, by Observation~\ref{obs:3}, $0< p.x-r(t').x\leq V/16$. Several cases arise.

If $s(0)=p$, then $s(t')=p\in\Qii(r,t')$, which is a contradiction because, by the algorithm (specifically, by Rule~\ref{rule2} of Section~\ref{sec:algo}), $\DP{r,t'}.x$ must be less than the $x$-coordinate of every robot in $\Qii(r,t')$, and therefore $r$ cannot be found in $p$ at time $t$.

If $s(0)\neq p$, then $s$ performs at least one positive movement to reach $p$. Let $t''<t$ be the last time at which $s$ performs a \Look phase such that $s(t'')\neq p$. By symmetry between $r$ and $s$, we may assume that $t''\leq t'<t$.

Suppose that $s$ moves strictly upward between $t''$ and $t$ (see Figure~\ref{fig:CAa}). Hence $0< p.y-s(t'').y\leq V/16$. Because $t''\leq t'$, it follows that $s(t')\in\Qii(r,t')$, which contradicts the fact that $r$ reaches $p$ in the next move.

\begin{figure}[ht]
\centering
\subfigure[]{\label{fig:CAa}\includegraphics[scale=1]{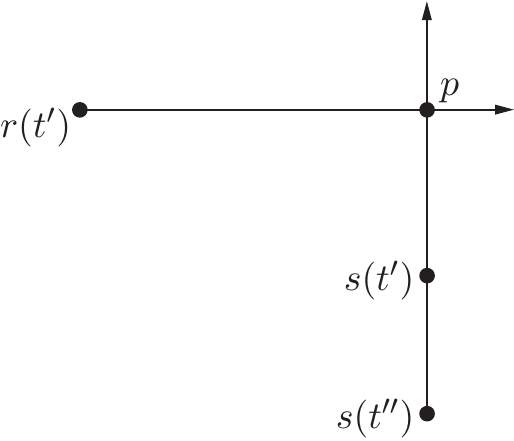}}\qquad
\subfigure[]{\label{fig:CAb}\includegraphics[scale=1]{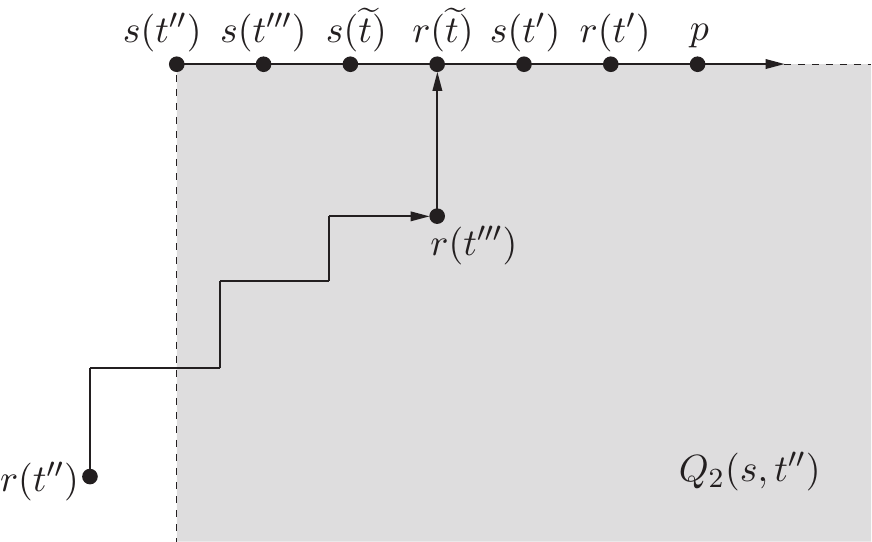}}
\caption{Proof of Lemma~\ref{lem:collision}. In~(a), $s$ moves upward between $t''$ and $t$. In~(b), $s$ moves rightward.}
\label{fig:CA}
\end{figure}

Otherwise, $s$ moves strictly rightward between $t''$ and $t$. Since $t''\leq t'$, it follows that $s(t'').y=s(t').y=p.y$ (see Figure~\ref{fig:CAb}). $s(t')$ cannot lie to the right of $r(t')$, otherwise it would be in $\Qii(r,t')$, yielding a contradiction with the algorithm (Rule~\ref{rule2} of Section~\ref{sec:algo}). Hence $s(t'').x\leq s(t').x\leq r(t').x< p.x$. We claim that $r(t'').y<s(t'').y$. Indeed, suppose by contradiction that $r(t'').y=s(t'').y$. If $r(t'').x>s(t'').x$, then $r(t'')\in \Qii(s,t'')$ and $s$ computes a destination point that is not to the left of $r$, which again contradicts Rule~\ref{rule2} of the algorithm. Otherwise $r(t'').x\leq s(t'').x$, which implies that $r$ and $s$ collide between $t''$ and $t'$, contradicting the minimality of $t$.

Because $r.y<p.y$ at time $t''$ and $r.y=p.y$ at time $t'$, there is a time $\widetilde t\in (t'',t']$ at which $r.y$ first becomes equal to $p.y$. Note that $r(\widetilde t).x>s(\widetilde t).x$, otherwise $r$ and $s$ would collide between $\widetilde t$ and $t'$. Hence $r(\widetilde t)\in \Qii(s,\widetilde t)$. Note also that $r(t'')\not\in \Qii(s,t'')$, because $\DP{s,t''}.x\geq r(t'').x$. Since each move covers at most $V/16$, $r$ performs more than one move between $t''$ and $\widetilde t$: if $r$ enters $\Qii(s)$ for the last time from below, it must move vertically more than once; if $r$ enters $\Qii(s)$ for the last time from the left, then it must turn upwards at some point (refer to Figure~\ref{fig:CAb}). More precisely, $r$ performs at least one \Look phase in $[t'',\widetilde t)$, the last of which at time $t'''$, and $r$ moves strictly upward between $t'''$ and $\widetilde t$. Then
$$s(t'').x \leq s(t''').x \leq s(\widetilde t).x < r(\widetilde t).x = r(t''').x.$$
It follows that $0<r(t''').x-s(t''').x\leq V/16$. Moreover, $0<s(t''').y-r(t''').y\leq V/16$, hence $s(t''')\in\Qi(r,t''')$. This contradicts Rule~\ref{rule2} of the algorithm, because $\DP{r,t'''}.y\geq s(t''').y$.
\end{proof}

\subsection{Convergence and Termination\label{sec:T}}

In this final section, we will prove that the robots will converge to the same limit point (Lemma~\ref{lem:convergence}), and then finally that our \NearG algorithm is correct (Theorem~\ref{thm:correct}).

Let $\ell$ be the point having the $x$-coordinate of the rightmost point in \IC, and the $y$-coordinate of the topmost point in \IC. That is,
$$\ell=\left(\max_{r\in\setRobots}\left\{r(0).x\right\},\max_{r\in\setRobots}\left\{r(0).y\right\}\right).$$

\begin{lemma}
\label{lem:convergence}
If no robot ever terminates its execution, then all robots converge towards point $\ell$.
\end{lemma}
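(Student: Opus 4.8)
The plan is to show that the two coordinates of every robot converge to the corresponding coordinates of $\ell$, treating $x$ and $y$ separately and invoking the algorithm's symmetry (Observation~\ref{obs:sym}). By Observation~\ref{obs:1}, the quantity $\max_{r\in\setRobots}\{r(t).x\}$ is nondecreasing in $t$; moreover, Rule~\ref{rule3} (``$r^*$ never moves to the right of the rightmost robot it can see'') combined with the connectivity of the awareness graph $\widetilde G(t)$ (Corollary~\ref{cor:connectaware}) should force this maximum to stay bounded above by $\ell.x$. Indeed, the globally rightmost robot can never move rightward, because it sees no robot strictly to its right within the connected component, so by Observation~\ref{obs:1} it never increases its $x$-coordinate; hence $\ell.x$ is an invariant upper bound, and $\max_r\{r(t).x\}$ is a bounded monotone sequence that converges to some limit $L_x\leq \ell.x$. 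The first real step is therefore to prove $L_x=\ell.x$, i.e.\ that the rightmost coordinate never strictly decreases its \emph{witness}: the set of robots attaining the running maximum is stable, so $L_x=\ell.x$.

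The heart of the argument is to show that every robot's $x$-coordinate actually converges to $\ell.x$ (not merely that the maximum does). Here I would argue by contradiction: suppose some robot's $x$-coordinate stays bounded away from $\ell.x$, and consider the infimum over robots of $\limsup_t r(t).x$. Using connectivity of $\widetilde G(t)$ and the ``mutual awareness'' bound $\norm{r(t)-s(t)}_\infty\leq V-\rho$ (Observation~\ref{obs:AWequiv}), any robot that lags behind must have a mutually aware neighbor close to it in the $x$-direction; the key quantitative engine is that whenever a robot sees another robot strictly to its right and is not blocked by a halt zone, Rules~\ref{rule1} and~\ref{rule6} guarantee a movement of a definite positive size (capped at $\rho/2$, then halved). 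Since no robot ever terminates (by hypothesis), and by the $\delta$-fairness of the scheduler a robot that repeatedly computes a positive $dp$ eventually realizes at least $\delta$ of it, a lagging robot must keep advancing rightward by fixed positive increments, contradicting the assumption that its $x$-coordinate has a limit point strictly below $\ell.x$. The symmetric statement for $y$ follows from Observation~\ref{obs:sym}, and combining the two gives convergence to $\ell=(\ell.x,\ell.y)$.

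The main obstacle I anticipate is the interplay between the halt zones and the asynchronous scheduler. Rule~\ref{rule4} can freeze a robot's rightward motion whenever it sees a robot in $H_1$ to its left, so a naive ``everyone keeps moving right'' argument fails; one must show that a robot cannot be \emph{permanently} halted unless its left-neighbors have themselves caught up, which requires a careful ordering or potential-function argument propagating progress from right to left through the connected awareness graph. The delicate point is that ``seeing a robot in $H_1$'' depends on an obsolete \Look, so I would need to combine Lemma~\ref{lem:aware} (permanence of mutual awareness) with Observation~\ref{obs:3} (each move is at most $\rho/4$) to bound how stale such observations can be, and thereby guarantee that a blocking configuration is eventually resolved. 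I expect the cleanest route is to define a potential such as $\sum_{r}(\ell.x-r(t).x)$ (and symmetrically for $y$), show it is nonincreasing, and prove it cannot converge to a positive value: if it did, the laggards would be eligible to move by a fixed amount infinitely often, which—via $\delta$-fairness—forces a strict decrease, the contradiction that closes the proof.
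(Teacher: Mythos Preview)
Your outline has two gaps that prevent it from closing. First, the coordinates cannot be decoupled: the last line of the protocol compares $dp.x$ with $dp.y$ and moves \emph{only} in the direction of the larger, so a robot whose $x$-coordinate lags behind $\ell.x$ may compute $dp.y\ge dp.x$ at every activation and move exclusively upward. Your potential $\sum_r(\ell.x-r(t).x)$ then stays constant, and ``seeing a robot strictly to the right'' does not force a rightward step. Relatedly, mutual awareness only gives $\norm{r-s}_\infty\le V-\rho$; a lagging robot's aware neighbour may share its limit $x$-coordinate and differ only in $y$, so connectivity of $\widetilde G$ does not by itself produce a witness strictly to the right. Second, Rules~\ref{rule1} and~\ref{rule6} cap the step length from \emph{above}; they furnish no lower bound, and as robots approach their individual limits the computed $dp$ may shrink to zero. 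The $\delta$-fairness clause helps only once you already know $|dp|$ is bounded below by a fixed positive constant, which your sketch never establishes.

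The paper supplies both missing ingredients at once by working in two dimensions with the (finite) set of limit points $\LIM(r)$. It fixes $\lambda>0$ smaller than every nonzero coordinate gap between distinct limit points and smaller than $\min\{\rho/2,\delta\}$, then passes to a time $t_0$ after which every robot is within $\lambda/3$ of its own limit. Taking $a$ to be the lowest among the leftmost limit points and $\mathcal A$ its cluster, connectivity of $\widetilde G$ yields a mutually aware robot converging to some $b\ne a$, and one splits on whether $b.x>a.x$ or ($b.x=a.x$ and $b.y>a.y$). In each branch an extremal robot of $\mathcal A$ (the rightmost, respectively the topmost, at a suitably late time) is shown to compute a destination more than $\lambda/3$ away in \emph{some} coordinate; since $\lambda/3<\delta$ it actually moves that far and overshoots $a$---in $x$ or in $y$, either way a contradiction. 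That two-dimensional overshoot is exactly what a one-coordinate potential cannot detect; and the halt-zone obstacle you flag is absorbed by the same choice of $\lambda$ and of the extremal robot, not by a separate left-to-right propagation argument.
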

\begin{proof}
Let an execution of the robot set \setRobots be fixed, in which no robot ever terminates. By Observation~\ref{obs:1}, the movement of each robot is monotonically increasing with respect to both the $x$-coordinate and the $y$-coordinate. Also, at any time, each robot's coordinates are bounded from above by the coordinates of $\ell$. It follows that each robot $r$ converges towards a point, denoted by $\LIM(r)$, such that $\LIM(r).x\leq \ell.x$ and $\LIM(r).y\leq \ell.y$.

If all robots have the same convergence point, then this point must be $\ell$, because there is a robot whose $x$-coordinate is constantly $\ell.x$ and a (possibly distinct) robot whose $y$-coordinate is constantly $\ell.y$. Hence, in this case the lemma follows. Thus, let us assume that there is more than one convergence point. Let $\lambda\in\mathbb R^+$ be any positive number such that:
\begin{itemize}
\item $\lambda\leq\LIM(r).x-\LIM(s).x$ for every $r,s\in\setRobots$ with $\LIM(r).x>\LIM(s).x$;
\item $\lambda\leq\LIM(r).y-\LIM(s).y$ for every $r,s\in\setRobots$ with $\LIM(r).y>\LIM(s).y$;
\item $\lambda\leq V-\dist(\LIM(r),\, \LIM(s))$ for every $r,s\in\setRobots$ with $\dist(\LIM(r),\, \LIM(s))<V$;
\item $\lambda \leq \min\left\{\rho/2,\,\delta\right\}$.
\end{itemize}
Because \setRobots is a finite set, there is a time $t_0$ at which, for every $r\in\setRobots$,
$$\dist(r(\Last{r,t_0}),\, \LIM(r))<\lambda/3.$$

By definition of $\lambda$, if $\dist(\LIM(r),\, \LIM(s))< V$, then $\dist(r(t),\, s(t))<V$ for all $t\geq t_0$. On the other hand, if $\AW{r(t),\, s(t)}$ for some $t\geq t_0$, then in particular $\dist(r(t),\, s(t))\leq V-\rho/2$, and therefore $\dist(\LIM(r),\, \LIM(s))<V$.

Let us choose $t_1>t_0$ such that every robot in \setRobots executes at least one complete cycle between $t_0$ and $t_1$ (i.e., from a \Look phase to the next). We further assume that, for every $r\in\setRobots$, if $r(t_0).x<\LIM(r).x$ (resp.\ $r(t_0).y<\LIM(r).y$), then in at least one such cycle (i.e., executed between $t_0$ and $t_1$) $r$ moves strictly rightward (resp.\ upward). Note that we can make this assumption because $\LIM(r).x$ must be approached indefinitely by $r.x$, and therefore, if $r(t_0).x<\LIM(r).x$, then $r$ must make a rightward move at some point after $t_0$ (and similarly for $y$-coordinates and upward moves).

Let $a$ be the lowest among the leftmost convergence points of the robots in \setRobots, and let $\mathcal A\subset \setRobots$ be the set of robots that converge towards $a$.

Suppose first that there exists some robot $s\in\setRobots\setminus \mathcal A$ converging to $b\neq a$, such that $\dist(a,b)<V$ and $b.x>a.x$. Let $r$ be any rightmost robot of $\mathcal A$ at time $t_1$. As observed three paragraphs above, $r$ and $s$ can see each other at any time since $t_0$.

\begin{figure}[ht]
\centering
\subfigure[]{\label{fig:CTa}\includegraphics[scale=0.6]{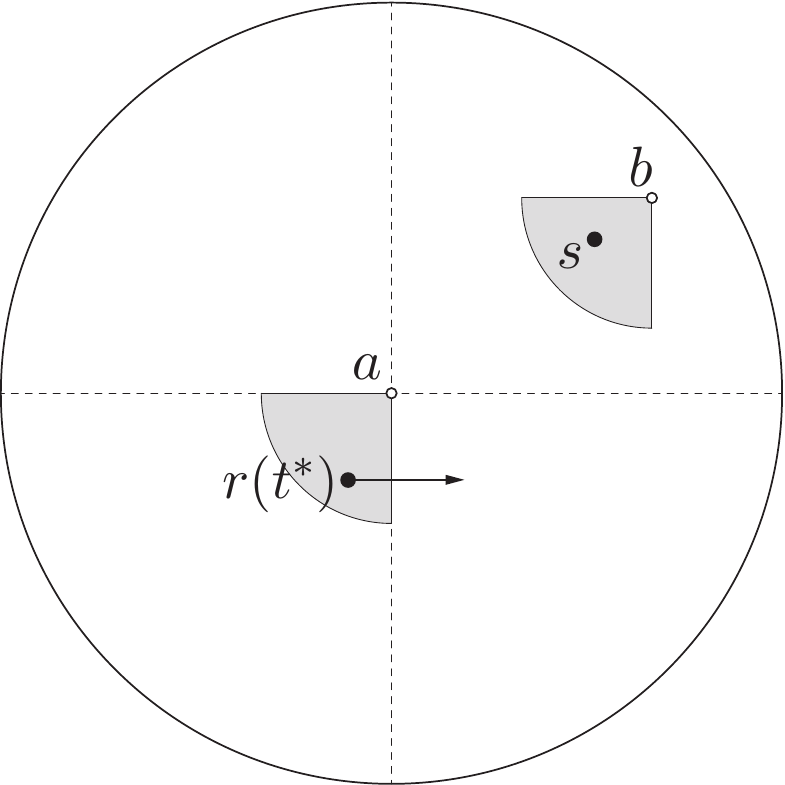}}\quad
\subfigure[]{\label{fig:CTb}\includegraphics[scale=0.6]{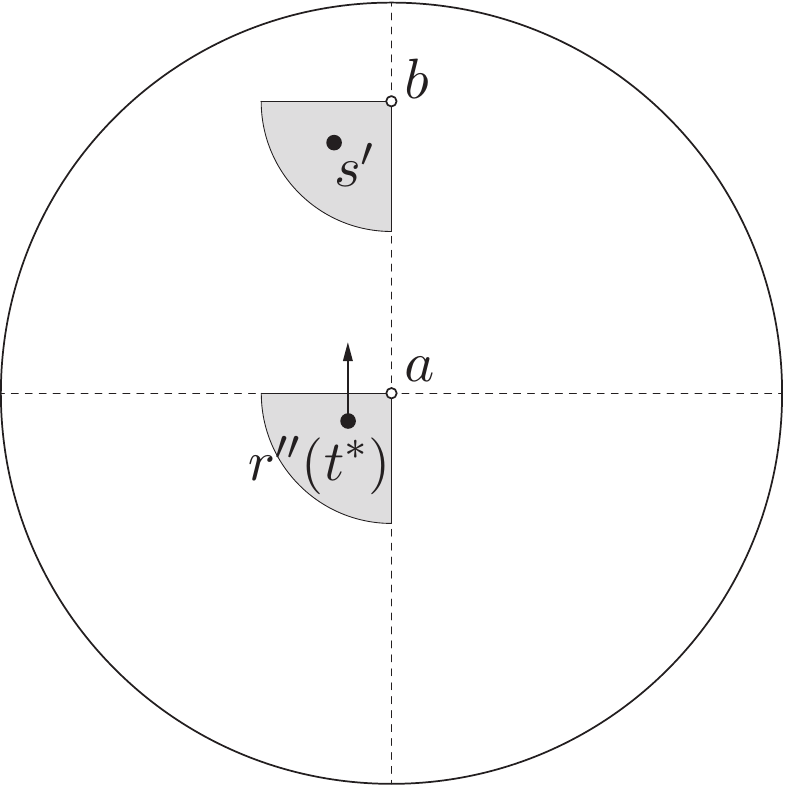}}\quad
\subfigure[]{\label{fig:CTc}\includegraphics[scale=0.6]{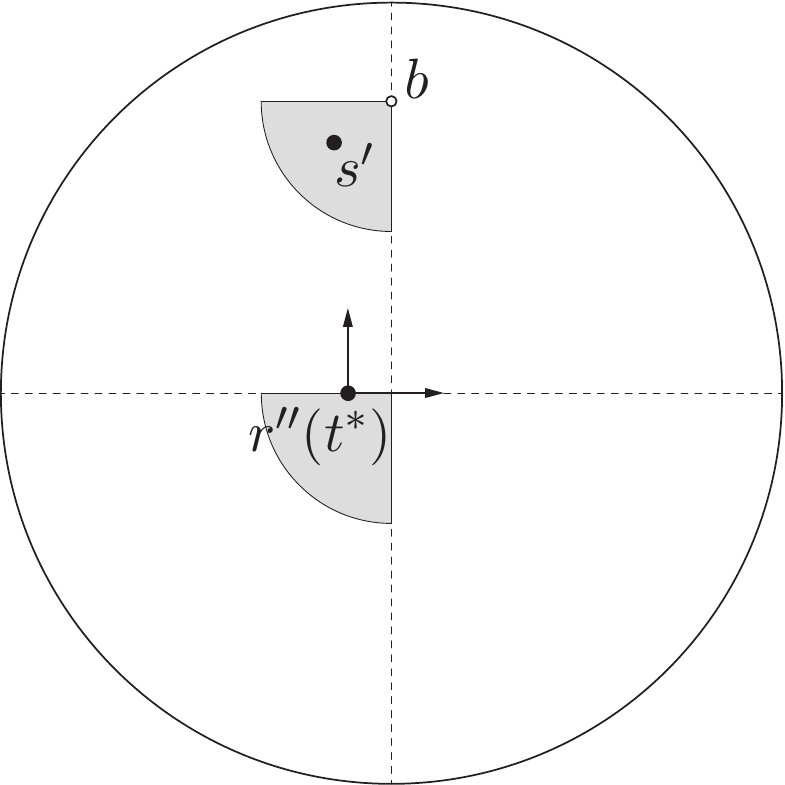}}
\caption{Proof of Lemma~\ref{lem:convergence}. In~(a), $b$ lies strictly to the right of $a$. In~(b), $a.x=b.x$ and $r''(t_1).y<a.y$. In~(c), $a.x=b.x$ and $r''(t_1).y=a.y$.}
\label{fig:CT}
\end{figure}

If $r.x<a.x$ then, by construction, there exists a time $t^*\in[t_0,t_1]$ at which $r$ performs a \Look phase, such that $r(t^*).x<\DP{r,t^*}.x$ and $r(\First{r,t^*}).x=r(t_1).x$ (see Figure~\ref{fig:CTa}). According to the algorithm (specifically, by Rule~\ref{rule2} of Section~\ref{sec:algo}), if $r$ is able to compute such a destination point, it means that no robot of $\mathcal A$ lies in $\Qii(r)$ at time $t^*$. Therefore, by definition of $\lambda$ and by construction, every robot in $\Qii(r,t^*)$ has an $x$-coordinate that is greater than $a.x+2\lambda/3$. Because $s(t^*).x>a.x+2\lambda/3$ as well, it follows that $\DP{r,t^*}.x>r(t^*).x+\lambda/3$ (observe that no robot in $\Qi(r,t^*)$ can prevent $r$ from moving rightward by at least $\rho/4>\lambda/3$, due to Proposition~\ref{prop1}). Additionally, $\lambda/3 <\delta$, hence $r$ actually moves by more than $\lambda/3$. But $r(t^*).x+\lambda/3 > a.x$, contradicting the fact that $r.x$ monotonically converges to $a.x$.

Otherwise, $r.x=a.x$ holds. Then, let $t^*=\First{r,t_1}$. At time $t^*$, $r$ sees no robot $q$ with $r(t^*).x<q(t^*).x\leq r(t^*).x+2\lambda/3$. Moreover, $r$ sees $s$, and $s(t^*).x>r(t^*).x+2\lambda/3$. Hence, $\DP{r,t^*}$ has distance greater than $\lambda/3$ from $r(t^*)$, and $r$ actually moves rightward or upward by more than $\lambda/3 <\delta$. When $r$ is done moving, either $r.x>a.x$ or $r.y>a.y$, contradicting the fact that $\LIM(r)=a$.

Suppose now that there is no limit point $b\neq a$ such that $\dist(a,b)<V$ and $b.x>a.x$. By Corollary~\ref{cor:connectaware}, $\widetilde G(t_0)$ is connected, hence there exist robots $r'\in\mathcal A$ and $s'\in\setRobots\setminus\mathcal A$ that are mutually aware at time $t_0$ (and also at any time $t\geq t_0$, by Lemma~\ref{lem:aware}). Let $b=\LIM(s')$. Then, either $\dist(a,b)\geq V$ or $a.x=b.x$ (recall that $a$ is a leftmost convergence point). However, observe that if $\dist(a,b)\geq V$, then $r'$ and $s'$ cannot be mutually aware at any time $t\geq t_0$, because $\dist(r'(t),s'(t))>V-\rho/2$. Therefore, $a.x=b.x$ and $a.y<b.y<a.y+V$.

Let $r''$ be any topmost robot of $\mathcal A$ at time $t_1$. By Corollary~\ref{cor:connectvisible}, $r'$ sees $s'$ at any time $t\geq t_0$. Then, by definition of $\lambda$ and $t_0$, also $r''$ sees $s'$ at any time $t\geq t_0$.

Suppose first that $r''(t_1).y<a.y$ (see Figure~\ref{fig:CTb}). By definition of $t_1$, there exists a time $t^*\in[t_0,t_1]$ at which $r''$ performs a \Look phase, such that $r''(t^*).y<\DP{r'',t^*}.y$ and $r''(\First{r'',t^*}).y=r''(t_1).y$. According to the algorithm (specifically, by Rule~\ref{rule2} of Section~\ref{sec:algo}), if $r''$ is able to compute such a destination point, it means that no robot of $\mathcal A$ lies in $\Qi(r'')$ at time $t^*$. Therefore, by definition of $\lambda$ and by construction, every robot in $\Qi(r'',t^*)$ has a $y$-coordinate that is greater than $a.y+2\lambda/3$. On the other hand, $r''$ sees no robot $q$ with $q(t^*).y<r''(t^*).y$ and $\dist(r''(t^*),q(t^*))\leq V-\rho/2$, hence $r''$ is able to move upward by more than $\lambda/3$. But indeed, $r''$ does see $s'$, and $s'(t^*).y>a.y+2\lambda/3$, hence $\DP{r'',t^*}.y>r''(t^*).y+\lambda/3$. Once again, this contradicts the fact that $\LIM(r'')=a$.

Finally, suppose that $r''(t_1).y=a.y$ (see Figure~\ref{fig:CTc}), and let $t^*=\First{r'',t_1}$. At time $t^*$, $r''$ sees no robot $q$ in $\Qi(r'',t^*)$ with $r''(t^*).y<q(t^*).y\leq r'(t^*).y+2\lambda/3$. Similarly to the previous paragraph's case, no robot below $r''$ can prevent $r''$ from moving upward, and the presence of $s'$ makes $r''$ compute a destination point that is more than $\lambda/3<\delta$ away from $r''$. Thus, $r''$ moves either upward of rightward by more than $\lambda/3$, which is in contradiction with the fact that $\LIM(r'')=a$.
\end{proof}

To prove that termination is correctly detected, we need one last lemma.

\begin{lemma}
A robot $r$ terminates its execution at time $t$ only if it sees all the robots in $\mathcal R$ at time $\Last{r,t}$.
\label{lem:lights_convergence}
\end{lemma}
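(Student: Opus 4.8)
The plan is to argue by contradiction, combining the persistent connectivity of the intermediate distance graph $G(\cdot)$ (Corollary~\ref{cor:connectvisible}) with the deliberate choice of the termination threshold $\varepsilon'$. Suppose $r$ terminates at time $t$ and set $t_r=\Last{r,t}$. The termination test executed in the \Compute phase (Figure~\ref{algo:1}) succeeds precisely when all robots in the visible set $\mathcal Z$ lie pairwise within distance $\varepsilon'$; since the local coordinate system of $r$ is merely a translation of the global one, these observed distances coincide with the true Euclidean distances, and since $r\in\mathcal Z$, the whole of $\mathcal Z$ lies in the closed disk of radius $\varepsilon'$ about $r(t_r)$. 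The single arithmetic fact I will exploit is that $\varepsilon'=\min\{\varepsilon,\rho/2\}\leq\rho/2$, whence $\varepsilon'+(V-\rho/2)\leq V$; this is exactly why the algorithm caps $\varepsilon$ at $\rho/2$.

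Now I would assume, for contradiction, that some robot $w\in\setRobots$ is not seen by $r$ at time $t_r$, i.e.\ $\dist(r(t_r),w(t_r))>V$. By Corollary~\ref{cor:connectvisible}, $G(t_r)$ is connected, so there is a path $r=u_0,u_1,\dots,u_k=w$ in $G(t_r)$, with $\dist(u_i(t_r),u_{i+1}(t_r))\leq V-\rho/2$ for every $i$ (this being the edge threshold defining $G$). The heart of the argument is a short induction on $i$ showing that every $u_i$ is visible to $r$ at time $t_r$, that is, $u_i\in\mathcal Z$. The base case $u_0=r$ is trivial. For the inductive step, if $u_i\in\mathcal Z$ then the termination test gives $\dist(r(t_r),u_i(t_r))\leq\varepsilon'$, and the triangle inequality yields
$$\dist(r(t_r),u_{i+1}(t_r))\leq \dist(r(t_r),u_i(t_r))+\dist(u_i(t_r),u_{i+1}(t_r))\leq \varepsilon'+\left(V-\tfrac{\rho}{2}\right)\leq V,$$
so $r$ sees $u_{i+1}$ and $u_{i+1}\in\mathcal Z$.

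Carrying the induction to $i=k$ forces $w=u_k\in\mathcal Z$, contradicting $\dist(r(t_r),w(t_r))>V$; hence no unseen robot exists and $r$ sees all of $\setRobots$ at $\Last{r,t}$. I do not anticipate a real obstacle: the only subtlety is that the induction must propagate both properties in lockstep---visibility of $u_i$ feeds the termination bound $\dist(r(t_r),u_i(t_r))\leq\varepsilon'$, which in turn, through the $G$-edge of length at most $V-\rho/2$, keeps the next robot $u_{i+1}$ within the visibility radius $V$. Making this chain close is precisely what the constraint $\varepsilon'\leq\rho/2$ guarantees, so once that inequality is in place the proof is routine.
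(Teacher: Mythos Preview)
Your proof is correct and uses essentially the same ingredients as the paper's: connectivity of $G(t_r)$ from Corollary~\ref{cor:connectvisible}, the termination bound $\dist(r(t_r),u)\leq\varepsilon'$ for visible robots, and the key inequality $\varepsilon'+(V-\rho/2)\leq V$. The only cosmetic difference is that the paper, instead of inducting along a path, directly picks a single $G(t_r)$-edge $\{s,s'\}$ crossing the cut between $\mathcal Z$ and $\setRobots\setminus\mathcal Z$ and derives the contradiction $\rho/2<\dist(r(t_r),s(t_r))\leq\varepsilon'\leq\rho/2$ from that one edge; your path induction is a mild repackaging of the same triangle-inequality step.
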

\begin{proof}
Let $r$ terminate its execution at time $t$, and let $\mathcal Z$ be the set of robots that are at distance at most $\varepsilon'$ from $r$ at time $t'=\Last{r,t}$. Note that $\mathcal Z$ is not empty, because $r\in \mathcal Z$. Because $r$ terminates, it follows that every robot in $\setRobots\setminus \mathcal Z$ has distance greater than $V$ from $r$ at time $t'$.

Assume for a contradiction that $\setRobots\setminus \mathcal Z$ is not empty. By Corollary~\ref{cor:connectvisible}, $G(t')$ is connected, and therefore there are a robot $s\in\mathcal Z$ and a robot $s'\in \setRobots\setminus \mathcal Z$ such that $\dist(s(t'),\,s'(t'))\leq V-\rho/2$. Since $\dist(s'(t'),\,r(t'))>V$, then $\dist(s(t'),\,r(t'))>\rho/2$, by the triangle inequality. But $s\in \mathcal Z$, hence $\dist(s(t'),\,r(t'))\leq \varepsilon'\leq \rho/2$, which yields a contradiction.
\end{proof}

By putting together all the previous results, we obtain the following.

\begin{theorem}
\label{thm:correct}
The algorithm in Figure~\ref{algo:1} correctly solves the \NearG problem under Assumption~\ref{a:assumption1}.
\end{theorem}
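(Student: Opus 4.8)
The plan is to verify, in turn, the three requirements in the definition of \NearG: that no two robots ever share a location, that every robot terminates in finite time, and that the final configuration is contained in a disk of radius $\varepsilon$. The first requirement is immediate: by hypothesis the robots start at distinct locations, and by Lemma~\ref{lem:collision} no collision ever occurs, so the robots occupy distinct locations at all times, in particular once they have all terminated. The remaining work is to establish termination together with the gathering guarantee, and for this I would lean on Lemma~\ref{lem:convergence}, Lemma~\ref{lem:lights_convergence}, and the monotonicity recorded in Observation~\ref{obs:1}.

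I would first argue that at least one robot terminates. Suppose not; then the hypothesis of Lemma~\ref{lem:convergence} holds, so all robots converge to the single point $\ell$. Hence for every $\eta>0$ there is a time after which all robots lie within distance $\eta$ of $\ell$; choosing $\eta=\varepsilon'/2$ makes every pair of robots permanently closer than $\varepsilon'$ and mutually visible (as $\varepsilon'\le\rho/2<V$), so the very next \Look phase of any robot satisfies the termination predicate $\dist(\Pos{r_1},\Pos{r_2})\le\varepsilon'$ for all visible pairs, a contradiction. Let $t^\ast$ be the first termination time. By Lemma~\ref{lem:lights_convergence} the terminating robot sees every robot of \setRobots at its last \Look, all pairwise within $\varepsilon'$; thus at time $t^\ast$ the whole swarm lies in a disk of radius $\varepsilon'$ and, in particular, is mutually visible. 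From $t^\ast$ on, Observation~\ref{obs:1} freezes the quantities $\max_{r}r(t).x$ and $\max_{r}r(t).y$ (a rightmost or topmost robot sees nobody strictly beyond it, hence cannot advance further, and no other robot can compute a destination past the current maximum), while every coordinate is non-decreasing and bounded above by these maxima. Consequently all robots remain forever inside an axis-aligned box of side at most $\varepsilon'$, which already secures the disk condition: every configuration after $t^\ast$, and hence the final one, fits in a disk of radius $\varepsilon'/\sqrt2\le\varepsilon$.

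What remains, and what I expect to be the crux, is to show that \emph{every} robot terminates, not merely the first one. The difficulty is twofold. First, Lemma~\ref{lem:convergence} assumes that \emph{no} robot ever terminates, so once a robot has stopped I can no longer invoke it verbatim. Second, confinement to a box of side $\varepsilon'$ only bounds the diameter by $\varepsilon'\sqrt2$, whereas the termination predicate demands \emph{pairwise} distance at most $\varepsilon'$, so a still-active robot may keep failing to terminate even after the cluster has formed. To close this gap I would re-run the convergence analysis of Lemma~\ref{lem:convergence} \emph{restricted} to the still-active robots inside the confined, permanently mutually-visible region guaranteed by the previous paragraph (mutual visibility persists because the box has diameter $\varepsilon'\sqrt2\ll V$): since the halt zones and move-space obstructions are vacuous at the scale $\varepsilon'\ll V$, any active robot strictly below or to the left of a visible robot can still advance by a positive amount bounded away from $0$, so the active robots are driven to the common top-right corner and the cluster contracts until the predicate holds at each of their \Look phases. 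Throughout, I would invoke Lemma~\ref{lem:lights_convergence} again so that every termination is correct (each terminating robot sees all of \setRobots), which in turn forces the frozen robots to be pairwise within $\varepsilon'$ and keeps the whole terminated configuration consistent with the disk bound already derived via Corollary~\ref{cor:connectvisible} and the box argument. The main obstacle is thus entirely concentrated in this last contraction step under the \ASYNC scheduler: reconciling partial termination with the convergence mechanism of Lemma~\ref{lem:convergence}, and upgrading ``confined to an $\varepsilon'$-box'' to ``pairwise within $\varepsilon'$'' precisely at the instant each remaining robot performs its \Look.
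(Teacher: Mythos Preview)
Your decomposition mirrors the paper's: collision-freedom via Lemma~\ref{lem:collision}, existence of a first terminator by contradiction with Lemma~\ref{lem:convergence}, global visibility at that terminator's last \Look via Lemma~\ref{lem:lights_convergence}, and then universal termination together with the disk bound. The paper dispatches the last step more briskly than you do: it asserts that from time $t'$ onward the swarm is confined to a square $Q$ of \emph{diagonal} $\varepsilon'$ with upper-right vertex $\ell$ (citing Observation~\ref{obs:1} and Lemma~\ref{lem:convergence}), which would make every subsequent \Look satisfy the termination predicate outright, and then concludes ``by the same reasoning used in the previous paragraph'' that every robot terminates while lying in $Q$.

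You are right to flag this as the delicate step. From ``pairwise within $\varepsilon'$ at time $t'$'' together with monotonicity one obtains directly only confinement to a square of \emph{side} $\varepsilon'$ (not diagonal $\varepsilon'$), and Lemma~\ref{lem:convergence} as stated assumes that no robot ever terminates. The paper's ``same reasoning'' is in effect precisely your proposed fix: re-run the convergence argument inside the small box, with already-terminated robots treated as permanently stationary. This goes through because inside that box every robot sees every other, the $H_1$-, $H_2$-, and boundary-of-$R$ constraints are vacuous at scale $\varepsilon'\sqrt{2}\ll V-\rho$, and the contradiction in the proof of Lemma~\ref{lem:convergence} is always extracted from a robot that is still performing cycles---so under the hypothesis that some robot never terminates, one simply instantiates that argument with a still-active robot. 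Your caution is warranted and your sketch is the intended completion; the paper just does not spell it out.
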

\begin{proof}
Assume for a contradiction that no robot ever terminates its execution.
Due to Lemma~\ref{lem:convergence}, all the robots converge to the same point $\ell$. Therefore, when all the robots are contained in a square $Q$ with diagonal length $\varepsilon'$ and upper-right vertex $\ell$, they all see each other at distance not greater than $\varepsilon'$. From this time onward, whenever a robot executes a \Look and then a \Compute phase, it terminates, contradicting our assumption.

Hence, at least one robot $r$ will terminate its execution at some point in time $t>0$. Due to Lemma~\ref{lem:lights_convergence}, $r$ sees all the robots in $\setRobots$ at time $t'=\Last{r,t}$. This means that, at time $t'$, all the robots are within distance $\varepsilon'$ from each other. Due to Observation~\ref{obs:1} and Lemma~\ref{lem:convergence}, at any time $t''\geq t'$, all the robots are contained in a square $Q$ with diagonal length $\varepsilon'$ and upper-right vertex $\ell$. Then, by the same reasoning used in the previous paragraph, we conclude that every robot eventually terminates its execution while lying in $Q$.

By Lemma~\ref{lem:collision}, no two robots ever collide, and hence they correctly solve \NearG.
\end{proof}

\section{Conclusions\label{sec:conclusions}}

In this paper we presented the first algorithm that solves the \NearG problem using the standard Euclidean distance (in contrast with~\cite{NearGSIROCCO}) for a set of autonomous mobile robots with limited visibility. The protocol presented here is collision-free and handles termination: this allows to potentially combine our protocol with solutions to other problems designed for the unlimited visibility setting. This is achieved without assuming that the total number of robots in the system is known to the robots themselves, and without allowing them to explicitly communicate.

We remark that our algorithm can be easily modified to solve the \NearG problem in the robot models that use any $p$-norm distance as opposed to the Euclidean distance, including the infinity norm distance.

Moreover, our algorithm is perfectly symmetric with respect to the $x$- and $y$-axes. This implies that our solution works also when the robots agree only on the direction of one of the two axes, say, the $y$-axis, and not necessarily on the orientation of the other axis.

\begin{corollary}
The \NearG problem is solvable under Assumption~\ref{a:assumption1} even if the robots agree only on the direction of one of the two axes.
\end{corollary}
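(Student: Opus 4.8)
The plan is to revisit the three correctness modules of Section~\ref{sec:correctness} in the one-axis model and show that each survives, using Observation~\ref{obs:sym} to avoid doing the work twice. Assume without loss of generality that all robots agree on the direction of the $y$-axis (``up''), while handedness is arbitrary. Each robot is then either \emph{right-handed}, in which case its local frame coincides with the absolute one, or \emph{left-handed}, in which case its local frame is the absolute frame reflected about the vertical axis; consequently a left-handed robot runs the algorithm on a horizontally mirrored snapshot, and tends to aggregate up and to the \emph{left} (in absolute terms) rather than up and to the right. If every robot is left-handed, the entire execution is just the horizontal mirror image of an all-right-handed execution, and hence is already covered by Theorem~\ref{thm:correct}. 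The only genuinely new situation is therefore a \emph{mixed} swarm, and this is what I would analyze.

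First I would observe that the $\AW{\cdot,\cdot}$ relation, the region $R=D_1\cap S$, and the disks $D_0,D_1,D_2$ are all defined by balls and a square centered at the acting robot, hence are invariant under horizontal reflection. The \emph{statements} of Lemmas~\ref{lem:aw1} and~\ref{lem:aw2}, and of Corollaries~\ref{cor:connectaware} and~\ref{cor:connectvisible}, thus remain meaningful regardless of handedness, so that the mutual-awareness graph still stays connected and the whole swarm stays bounded. From connectivity it follows, as in Lemma~\ref{lem:convergence}, that every coordinate of every robot is monotone (each $y$-coordinate nondecreasing, and each $x$-coordinate nondecreasing or nonincreasing according to the robot's handedness) and bounded, so each robot converges to a limit point. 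I would then adapt the limit-point argument of Lemma~\ref{lem:convergence}: if two limit points differ, a robot lying at the appropriate extreme of a minimal cluster (rightmost for a right-handed robot, leftmost for a left-handed one, and topmost in either case) still sees a farther robot, and the cohesion built into Rules~\ref{rule3} and~\ref{rule5} forces it to compute a destination at distance more than $\lambda/3<\delta$, contradicting convergence. The $y$-part of this argument is verbatim the original one, and the $x$-part is its horizontal mirror for left-handed clusters, so Observation~\ref{obs:sym} keeps the bookkeeping light.

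The hard part will be re-establishing mutual-awareness preservation (Lemmas~\ref{lem:aw1} and~\ref{lem:aw2}) and collision avoidance (Lemma~\ref{lem:collision}) for a \emph{mixed} pair, because there the halt zones $H_1,H_2$ and the move spaces (Definition~\ref{def:movespace}) point in opposite horizontal directions in absolute coordinates, so the single global picture used in those proofs no longer applies. My plan is to re-run each case of these proofs \emph{in the acting robot's own frame}, where the algorithm is the standard one and the halt-zone exclusions (notably the $-H_2$ argument of Lemma~\ref{lem:aw2}) are stated as in the original, and then translate the resulting containment back to absolute coordinates. Every geometric ingredient actually used---Proposition~\ref{prop1}, the convexity and horizontal symmetry of $R$, and the per-move bound $\rho/4$ from Observation~\ref{obs:3} and Rule~\ref{rule6}---is invariant under horizontal reflection, so each per-frame containment survives the translation. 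The single scenario with no analogue in the all-right-handed case is a right-handed robot moving rightward while a left-handed robot moves leftward \emph{toward} it; to rule out a collision here I would argue, as in Lemma~\ref{lem:collision}, that at the last \Look of each robot before the putative meeting time the partner already lies in the acting robot's forbidden quadrant ($\Qi$ or $\Qii$ \emph{in that robot's own frame}), so that Rule~\ref{rule2} forbids the computed destination for at least one of them, while the $\rho/4$ cap on each move prevents the two from crossing within a single cycle. Closing this case cleanly, and checking that the two robots' frame-dependent quadrant memberships are jointly incompatible with both destinations, is where the main effort lies.
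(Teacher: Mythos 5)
There is a genuine gap, and it is fatal to the whole plan: you analyze the \emph{unmodified} protocol of Figure~\ref{algo:1}, executed by left-handed robots on mirrored snapshots, and you defer as ``the main effort'' the re-proof of collision avoidance for mixed pairs. That effort cannot be closed, because the unmodified protocol is simply incorrect in a mixed swarm. Concretely (take $\varepsilon<\rho/2$, so $\varepsilon'=\varepsilon$): let $r$ be right-handed at global position $(0,0)$ and $l$ left-handed at $(d,0)$, with $\varepsilon'<d\leq\rho/2$. Each robot sees the other at $(d,0)$ \emph{in its own frame}, hence in its own quadrant \Qii; no halt zone is occupied, so each computes $dp=(d/2,0)$ in its own frame. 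In global coordinates, $r$'s destination is $(d/2,0)$ and $l$'s destination is $(d-d/2,0)=(d/2,0)$ as well: if the scheduler activates both simultaneously (legal even in \SSYNC), they collide. The monotonicity that makes Rule~\ref{rule2} safe in the paper --- a robot observed in \Qii can only move farther up/right, away from the observer --- is exactly what mixed handedness destroys, since here each robot moves \emph{toward} the other's move space. Dually, exchanging the roles ($l$ at $(0,0)$, $r$ at $(d,0)$, $\varepsilon'<d\leq D$) gives a deadlock: each robot sees the other ``behind'' itself, so Rule~\ref{rule3} forces $dp=(0,0)$ for both, and the pair never gathers nor terminates. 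So no re-derivation of Lemmas~\ref{lem:aw1}, \ref{lem:aw2} and~\ref{lem:collision} for mixed pairs can exist, and your reduction of the all-left-handed case to Theorem~\ref{thm:correct} by mirroring, while correct, covers only the trivial part of the statement.

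The paper's proof sidesteps all of this by \emph{modifying the algorithm} rather than the proofs: each robot first rotates its snapshot clockwise by $45^\circ$, runs the protocol of Figure~\ref{algo:1} on the rotated snapshot, and rotates the computed $dp$ back counterclockwise by $45^\circ$. After this conjugation, every robot's effective axes point along the global directions $(1,1)/\sqrt{2}$ and $(-1,1)/\sqrt{2}$ regardless of handedness; a left-handed robot differs from a right-handed one only in which of these two directions it calls $x$ and which it calls $y$. By Observation~\ref{obs:sym} the protocol is invariant under that swap, so the mixed execution is literally an execution of the original algorithm in a tilted global frame with agreement on both axes, and Theorem~\ref{thm:correct} applies as a black box --- no lemma needs to be revisited. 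The idea your proposal misses is that one-axis agreement requires a (small) syntactic change to the protocol; it cannot be obtained by analysis of the original protocol alone.
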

\begin{proof}
Suppose without loss of generality that the robots agree on the $y$-axis. Then, the following algorithm is employed: the input snapshot is first rotated clockwise by $45^\circ$, then the algorithm in Figure~\ref{algo:1} is applied to the resulting snapshot, and finally the computed destination point $dp$ is rotated counterclockwise by $45^\circ$.

Indeed, the two rotations effectively tilt the coordinate systems of all robots, in such a way that their $y$-axes become actually parallel to the line $y=x$ in the ``global'' coordinate system. This is equivalent to having the robots agree only on the positive direction of the line $y=x$, but allowing them to disagree on which is the $x$-axis and which is the $y$-axis. The algorithm in Figure~\ref{algo:1} still works because, due to Observation~\ref{obs:sym}, it is symmetric with respect to $x$- and $y$-coordinates.
\end{proof}

Therefore, under Assumption~\ref{a:assumption1}, the \NearG protocol can also be used to solve the classical gathering problem in the limited visibility scenario, when the robots have only this form of partial agreement on their local coordination systems, thus improving on~\cite{FloPSW05}, which requires total agreement on both axes and does not avoid collisions. Indeed, it is sufficient to convert the termination command in the algorithm in Figure~\ref{algo:1} with a move to point $\ell$, as defined in Section~\ref{sec:T}.

{We conjecture that no algorithm can solve \NearG with no agreement on at least one axis, and we leave this as an open problem. Another direction for future research would be to solve \NearG from any initial configuration in which the distance graph is connected, with no further assumption on the initial strong distance graph (cf.~Assumption~\ref{a:assumption1}). Again, we conjecture this problem to be unsolvable in \ASYNC; note that in this case some extra assumption is required, for instance that the total number of robots is known, or that robots are able to communicate. Finally, the more general model in which robots do not necessarily have the same visibility radius, and hence do not share a common unit distance, should be considered in conjunction with both the gathering problem and \NearG.}

\subsection*{Acknowledgments}
We would like to thank Paola Flocchini, Nicola Santoro, and Peter Widmayer, who contributed to the writing of this paper by sharing their ideas. We also thank the anonymous reviewers for precious comments that helped us improve the readability of this paper.

\bibliographystyle{plain}
\bibliography{NearG}

\end{document}